\newcommand{\bigO}{\mathcal{O}}
\newcommand{\BWT}{\ensuremath{\mathit{BWT}}}
\begin{document}

\title{String Attractors}
\author{Nicola Prezza}
\institute{Technical University of Denmark, DTU Compute\\
	\texttt{npre@dtu.dk}}

\maketitle

\begin{abstract}
Let $S$ be a string of length $n$. In this paper we introduce the notion of \emph{string attractor}: a subset of the string's positions $[1,n]$ such that every distinct substring of $S$ has an occurrence crossing one of the attractor's elements. 
We first show that the minimum attractor's size yields upper-bounds to the string's repetitiveness as measured by its linguistic complexity and by the length of its longest repeated substring.
We then prove that all known compressors for repetitive strings induce a string attractor whose size is bounded by their associated repetitiveness measure, and can therefore be considered as approximations of the smallest one. 
Using further reductions, we derive the approximation ratios of these compressors with respect to the smallest attractor and solve several open problems related to the asymptotic relations between repetitiveness measures (in particular, between the the sizes of the Lempel-Ziv factorization, the run-length Burrows-Wheeler transform, the smallest grammar, and the smallest macro scheme). These reductions directly provide approximation algorithms for the smallest string attractor.
We then apply string attractors to solve efficiently a fundamental problem in the field of compressed computation: we present a universal compressed data structure for text extraction 
that improves existing strategies simultaneously for \emph{all} known dictionary compressors and that, by recent lower bounds, almost matches the optimal running time within the resulting space. 
To conclude, we consider generalizations of string attractors to labeled graphs, show that the attractor problem is NP-complete on trees, and provide a logarithmic approximation computable in polynomial time. 
\end{abstract}

\section{Introduction}

Being able to determine the level of repetitiveness of a string is of critical importance in areas such as computational biology, data compression, stringology, and cryptography. In biology, genome repetitions often  indicate the presence of transposable elements and represent an important source of genetic variation. In data compression and stringology, repetitions are exploited to represent the string in a more compact form and to study properties such as periodicity and maximal repeats. In cryptography, repetitions in pseudo-random generators or in the encrypted messages should be avoided as they can produce easier-to-break codes.  Typical definitions of repetitiveness either look at the string's composition in terms of distinct $k$-mers or rely on the output of particular compressors. An example of the former case is  \emph{linguistic complexity}~\cite{trifonov1990making,troyanskaya2002sequence}: the rate between the number of distinct substrings and the maximum number of distinct substrings that can appear in a string of the same length on the same alphabet. The linguistic complexity of a string is of relevance in computational biology, where is often used to determine the complexity of a genome~\cite{trifonov1990making,troyanskaya2002sequence}. Despite intuitively capturing the degree of repetitiveness of a string, however, linguistic complexity does not tell us anything about how repetitions could be used to remove redundancy from the string. More empirical definitions---mostly used in the realms of data compression and stringology---define a string to be repetitive when its compressed representation is small when compared to the string's original size. Despite being intuitive (although somewhat circular), this definition presents an important issue: the landscape of data compression is composed of myriads of different compressors for repetitive strings---each coming with its distinct repetitiveness measure---so it is not always clear which one should be used given a particular string.
One effective compression strategy of this kind is, for example, to build a context-free grammar that generates only the string. The grammar takes the name of \emph{Straight Line Program} (SLP) if all rules have the form $X\rightarrow AB$, or \emph{run-length SLP} (RLSLP) if also rules of the form $X \rightarrow A^\ell$ are allowed, for any $\ell > 2$.
The problems of finding the smallest SLP---of size $g^*$---and the smallest run-length SLP---of size $g_{rl}^*$---are NP-complete~\cite{hucke2016smallest}, but fast and effective approximation algorithms are known, e.g. LZ78~\cite{ziv1978compression}, LZW~\cite{welch1984technique}, Re-Pair~\cite{larsson2000off}, Bisection~\cite{kieffer2000universal}. Another strategy, more powerful than grammar compression, is that of replacing repetitions with pointers to other locations in the string. The most powerful and general scheme falling into this category takes the name of \emph{pointer macro scheme}~\cite{storer1982data}, and consists in a set of substring equalities that allow to unambiguously reconstruct the string. Finding the smallest such system---of size $b^*$---is also NP-complete~\cite{gallant1982string}. However, if we add the constraint of unidirectionality  (i.e. text can only be copied from previous positions), then the optimal such scheme~\cite{lempel1976complexity} of size $z$ can be computed in linear time~\cite{crochemore2008computing} with a greedy algorithm known as LZ77.
Other effective techniques to compress repetitive strings include the run-length Burrows-Wheeler transform~\cite{burrows1994block} (RLBWT) and the Compact Directed Acyclic Word Graph~\cite{blumer1987complete,crochemore1997direct} (CDAWG). With the first technique, we sort all circular string permutations in a $n\times n$ matrix; the BWT is the last column of this matrix. The BWT contains few equal-letter runs if the string is very repetitive, therefore run-length compression often significantly reduces the size of this string permutation~\cite{makinen2010storage}. The number $r$ of runs in the BWT is yet another good measure of repetitiveness. Finally, one could build an automaton recognizing the string's suffixes, and indicate with $e$ its number of edges. Also the size $e^*$ of the smallest such automaton---the CDAWG---grows sublinearly with $n$ if the string is very repetitive~\cite{belazzougui2015composite}. Both RLBWT and CDAWG can be computed in linear time~\cite{munro2016space,belazzougui2014linear,nong2009linear,crochemore1997compact}. 
Few relations are known among these measures. Rytter~\cite{rytter2003application} proved that $g^* \in \bigO(z\log(n/z))$ by showing a reduction from grammars to unidirectional parses, and Belazzogui et al.~\cite{belazzougui2015composite,belazzougui2017representing} showed that $g^*,z,r \in \bigO(e^*)$.
In practice, $e^*$ is much larger (by about one order of magnitude) than the other repetitiveness measures~\cite{belazzougui2015composite}.

At this point, it is natural to ask the following questions: what common principle---if any---stands at the basis of these compressors? how is the compressed string's size related with the string's linguistic complexity?
In this paper, we propose an answer to these questions. We introduce the notion of \emph{string attractor}: a set of the string's positions such that all distinct  substrings have an occurrence crossing one of the attractor's elements. We show an upper-bound to the string's linguistic complexity and a lower bound on the length of the longest repeated substring as a function of the smallest attractor size for that string: in particular, a smaller attractor implies a more repetitive string. We furthermore prove that these combinatorial objects are a natural generalization of all known compressors for repetitive strings: grammars, LZ77, macro schemes, RLBWT, and automata recognizing the string's suffixes (e.g. the CDAWG and  the suffix tree) induce string attractors whose sizes are bounded by their associated repetitiveness measures, and  can therefore be interpreted as approximations of the smallest one. We provide techniques to derive a macro scheme and a grammar from any string attractor, and use these reductions to derive the approximation rates of the above compressors with respect to the smallest string attractor. Using the same techniques, we moreover uncover new relations between repetitiveness measures: we show that $g^*,z \in \bigO(b^*\log^2(n/b^*))$ and $g^*,z \in \bigO(r\log^2(n/r))$, thus providing the first bounds relating these measures and solving an open problem explicitly stated in~\cite{prezza2016can}.

After introducing and studying the properties of string attractors, we present an application of these combinatorial objects to the domain of compressed data structures. 
We start by noticing that our conversion technique yielding a macro scheme implies that only a logarithmic blow-up in space is needed to reconstruct the string from a string attractor. Indeed, we show that this property can be turned into a very efficient data structure---we call it \emph{A-DAG} (Attractor Directed Acyclic Graph)---for text extraction. Thanks to the universality of string attractors, this technique can be used \emph{verbatim} to extract text from any compressed representation within a space bounded by a (small) function of its associated repetitiveness measure. 
Let $\gamma$ be the size of a string attractor of a length-$n$ string $T$ over an alphabet $[1\dots,\sigma]$.
For any $\tau>0$, the A-DAG takes $\bigO(\gamma \tau \log_\tau(n/\gamma))$ words of space and supports the extraction of any length-$\ell$ substring of $T$ in $\bigO(\log_\tau(n/\gamma) + \ell\log\sigma/w)$ time. 
On one side, we obtain the first data structure supporting efficient random access on general macro schemes. In general, the A-DAG is faster than existing text-extraction data structures for all known repetitiveness measures (by our reductions, it is sufficient to replace $\gamma$ with $z$, $b$, $g$, $g_{rl}$, $r$, or $e$ in the above bounds). Moreover, a recent lower bound~\cite{CVY13} implies that our query times for $\tau=\log^\epsilon (n/\gamma)$ (for any small $\epsilon>0$) are very close to the optimum within the resulting space. 
Another interesting tradeoff is achieved choosing $\tau = (n/\gamma)^{\epsilon}$ for any constant $\epsilon>0$: in this case, we obtain optimal $\bigO(\ell\log(\sigma)/w)$ extraction time in the packed setting within $\bigO(\gamma^{1-\epsilon}n^\epsilon)$ space. This space is a weighted geometric average between the compressed and plain string sizes, is always at most $\bigO(n)$, and is $o(n)$ if $\gamma=o(n)$, i.e. if the string is asymptotically compressible.

In our last section we propose a  generalization of string attractors to labeled directed graphs capturing all distinct paths (interpreted as strings) in the graph. We show that the decisional version of the path attractor problem is NP-complete on labeled trees, and provide a polynomial-time logarithmic approximation based on a reduction to the set-cover problem.  
We conclude the paper by proposing several exciting open problems that naturally follow from our work, and discussing new research directions that could lead to a better understanding of the notion of repetitiveness.

Proof omitted within the first 12 pages are reported in Appendix \ref{sec:app} due to space limitations.

\section{Preliminaries}\label{sec:preliminaries}

When $s$ is a string, the notation $\overleftarrow s$ denote $s$ reversed.
For space reasons, we assume the reader to be familiar with the notions of Lempel-Ziv factorization~\cite{lempel1976complexity}, Burrows-Wheeler transform~\cite{burrows1994block}, run-length encoding, Compact Directed Acyclic Word Graph~\cite{blumer1987complete,crochemore1997direct} (CDAWG), and grammar compression~\cite{hucke2016smallest}.
In this paper we consider the version of the Lempel-Ziv factorization~\cite{lempel1976complexity} (LZ77 for short) without self-references and without trailing characters.

A \emph{macro scheme}\cite{storer1982data} is a set of $b$ directives of two possible types:
	\begin{itemize}
		\item $T[i..j] \leftarrow T[i'..j']$ (i.e. copy $T[i'..j']$ in $T[i..j]$), or
		\item $T[i] \leftarrow c$, with $c\in\Sigma$ (i.e. assign character $c$ to $T[i]$).
	\end{itemize}
such that the text can be unambiguously reconstructed from the directives. 

A \emph{bidirectional parse} is a macro scheme where the left-hand sides of the directives induce a text factorization, i.e. they cover the whole $T$ and they do not overlap. 
Note that LZ77 is a particular case of bidirectional parse (in particular, it is the optimal unidirectional one), and therefore is also a particular macro scheme. 

The \emph{height} of a macro scheme is defined as follows.  Let $i$ be a string position. If $i$ is an explicitly stored character, we define the \emph{character height} $h_i$ of $i$ to be equal to 1. Otherwise, let $j$ be the position from where $T[i]$ is copied according to the macro scheme. Then, we define the character height $h_i$ of $i$ to be $h_i = h_j+1$.
The macro scheme height $h$ is defined as the maximum character height $h = \max_{1\leq i\leq n}h_i$.

\section{String attractors}\label{sec:SA}

A string $S[1..n]$ is considered to be repetitive when the cardinality of the set $SUB_S = \{ S[i..j]\ |\ 1\leq i\leq j \leq |S|  \}$ of its distinct substrings is much smaller than the maximum number of distinct substrings on a text of the same length on the same alphabet. In the domain of data compression, a more relevant quantification $\gamma$ of the string's repetitiveness should, however,  also provide a meaningful way of representing $SUB_S$ within a space proportional to $\gamma$ so that $S$ itself can be reconstructed from this representation. In this respect, $SUB_S$ itself provides the ``ideal'' quantification of repetitiveness, but it is useless in practice as its size could be quadratic in $S$. As a result of these requirements, typical definitions of repetitiveness tend to be circular in the sense that rely on particular compressors such as the Lempel-Ziv factorization, straight-line programs, or the run-length Burrows-Wheeler transform. In other words, a string $S$ is usually considered to be repetitive when the size of its compressed representation is small when compared to the worst-case entropy  of the set of all length-$n$ strings on the same alphabet ($n\log\sigma$ bits). In this section, we tackle the problem from the opposite direction: we first give a more ``principled'' definition of string repetitiveness that naturally follows from the definition of $SUB_S$, and show that it directly relates with the number of distinct substrings. We then show that all known dictionary compressors are a particular case of our notion of repetitiveness, and show how our perspective of the problem can be exploited to uncover new properties of dictionary compressors and solve more efficiently data-structure problems related to them.
Our idea is the following: we fix a set of $\gamma$ positions on $S$ so that every element of $SUB_S$ has an occurrence crossing one of these positions, and claim that the size $\gamma^*$ of the smallest such set is a meaningful notion of string repetitiveness. We call the set of such positions \emph{string attractor}:

\begin{definition}\label{def: string attractor}
	A \emph{string attractor} of a string $T\in\Sigma^n$ is a set of $\gamma$ positions $\Gamma = \{j_1, ..., j_\gamma\}$ such that every substring $T[i..j]$ has an occurrence $T[i'..j'] = T[i..j]$ with $j_k \in [i',j']$, for some $j_k\in\Gamma$. 
\end{definition}

\begin{example}
	Note that $\{1, 2, .., n\}$ is always a string attractor (the biggest one) for any string. Note also that this is the only possible string attractor for a string composed of $n$ distinct characters.
\end{example}

\begin{example}\label{ex:attr}
	Consider the following string $S$, where we underlined the positions of a smallest string attractor $\Gamma^* = \{4,7,11,12\}$ of $S$.
	\begin{center}
		\texttt{CDA\underline BCC\underline DABC\underline C\underline A}
	\end{center}
	To see that $\Gamma^*$ is a valid attractor, note that every substring between attractor's positions has an occurrence crossing an attractor position (these substrings are \texttt{A,B,C,D,CD,DA,AB,CC}).	The remaining substrings cross an attractor position by definition.
	To see that $\Gamma^*$ is the smallest attractor for $S$, note that the alphabet size is $\sigma = 4 = |\Gamma^*|$, and any attractor $\Gamma$ must satisfy $|\Gamma| \geq \sigma$. 
\end{example}

Clearly, the set $\Gamma$ alone is not sufficient to reconstruct the original string as we do not even store characters associated with those positions. However, in Theorem \ref{th: attractor -> macro} we will show that only a logarithmic blow-up in space is needed to achieve this task. For now, we stress out that the size $\gamma^*$ of the smallest string attractor and the string's repetitiveness are two closely-related concepts. 
First, we show an upper-bound to the number of distinct $k$-mers.

\begin{lemma}\label{lem:kmers}
	Let $\Gamma$ be a string attractor of size $\gamma$ for $S$. Then, $S$ contains at most $\gamma k$ distinct $k$-mers, for every $1\leq k \leq |S|$.
\end{lemma}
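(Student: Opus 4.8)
The plan is to use the defining property of the string attractor: every distinct $k$-mer $S[i..i+k-1]$ has an occurrence $S[i'..i'+k-1]$ that contains some attractor position $j_t \in \Gamma$. I would charge each distinct $k$-mer to a pair $(j_t, p)$, where $j_t$ is an attractor element contained in the chosen occurrence and $p \in \{0, 1, \dots, k-1\}$ records the offset of $j_t$ within that occurrence, i.e. $p = j_t - i'$. The key claim is that this charging is injective: if two distinct $k$-mers $u$ and $v$ were charged to the same pair $(j_t, p)$, then both would have an occurrence with $j_t$ sitting at offset $p$, so both occurrences start at position $j_t - p$ and have length $k$, hence $u = S[j_t-p .. j_t-p+k-1] = v$, a contradiction.

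The key steps, in order, are: (1) fix $k$ with $1 \le k \le |S|$ and let $\mathcal{K}$ be the set of distinct $k$-mers of $S$; (2) for each $w \in \mathcal{K}$, invoke Definition~\ref{def: string attractor} to pick one occurrence $S[i'..i'+k-1] = w$ and one attractor position $j_t \in [i', i'+k-1] \cap \Gamma$, and define $f(w) = (j_t,\, j_t - i')$; (3) observe that $f$ maps $\mathcal{K}$ into $\Gamma \times \{0, \dots, k-1\}$, a set of size $\gamma k$; (4) prove $f$ is injective by the argument above — the pair $(j_t, p)$ determines the starting position $j_t - p$ and the length $k$ of an occurrence, hence determines the $k$-mer uniquely; (5) conclude $|\mathcal{K}| \le \gamma k$.

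I do not expect any serious obstacle here; the only mild subtlety is making sure the offset lies in the right range, namely that $0 \le j_t - i' \le k-1$, which is immediate from $j_t \in [i', i'+k-1]$. One should also note the bound is stated for all $k$ up to $|S|$, and for $k = |S|$ it degenerates (there is only one $k$-mer and $\gamma k \ge k = |S| \ge 1$), so nothing special is needed. The argument is essentially a counting/pigeonhole wrapper around the attractor definition.
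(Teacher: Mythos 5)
Your proposal is correct and is essentially the paper's own argument: the paper counts the $k$ possible $k$-mers inside the length-$(2k-1)$ window around each attractor position, which is exactly your injection into $\Gamma \times \{0,\dots,k-1\}$ written out explicitly. No issues.
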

\begin{proof}
	By the attractor's definition, every distinct $k$-mer
	appearing in the string has an occurrence crossing some attractor's position. It follows that, in order to
	count the number of distinct $k$-mers, we can restrict our attention to the
	regions of size $2k-1$ overlapping the attractor's positions. The upper-bound $\gamma k$ easily follows. \qed
\end{proof}


From the above lemma, we directly obtain:

\begin{theorem}\label{th:LC}
	$|SUB_S| \leq \sum_{k=1}^n\min\{\sigma^k, n-k+1, \gamma^*k\}$
\end{theorem}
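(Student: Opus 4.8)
The plan is to bound $|SUB_S| = \sum_{k=1}^n d_k$, where $d_k$ denotes the number of distinct $k$-mers occurring in $S$, by bounding each term $d_k$ by three obvious quantities and taking the minimum. The point is simply that each of the three bounds holds unconditionally for every $k$, so their pointwise minimum does too, and summing over $k$ gives the claim.

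First I would observe the trivial alphabet bound: a $k$-mer is a string of length $k$ over an alphabet of size $\sigma$, so there are at most $\sigma^k$ of them in total, hence $d_k \le \sigma^k$. Second, the window-count bound: $S$ has exactly $n-k+1$ length-$k$ windows (for $k \le n$), so at most that many distinct $k$-mers can occur, i.e. $d_k \le n-k+1$. Third, I would invoke Lemma~\ref{lem:kmers} directly: since $\Gamma^*$ is a string attractor of size $\gamma^*$, we have $d_k \le \gamma^* k$ for every $1 \le k \le n$. Combining, $d_k \le \min\{\sigma^k,\, n-k+1,\, \gamma^* k\}$ for each $k$, and summing over $k = 1, \dots, n$ yields $|SUB_S| = \sum_{k=1}^n d_k \le \sum_{k=1}^n \min\{\sigma^k,\, n-k+1,\, \gamma^* k\}$, which is exactly the stated inequality.

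There is essentially no obstacle here: the theorem is a one-line corollary of Lemma~\ref{lem:kmers} together with two elementary counting facts, so the ``proof'' is just the assembly of these three inequalities under a minimum and a summation. The only thing worth a sentence of care is the edge case $k = n$ (where $n-k+1 = 1$, and indeed $S$ has a single length-$n$ substring, namely itself), but this is already subsumed by the $n-k+1$ term and needs no special treatment. One could additionally remark that the three bounds dominate in different regimes — $\sigma^k$ for small $k$, $\gamma^* k$ for intermediate $k$ once $\gamma^* k < \sigma^k$, and $n-k+1$ for $k$ close to $n$ — which is what makes the bound meaningful as a measure of repetitiveness, but this observation is not needed for the proof itself.
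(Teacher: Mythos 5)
Your proof is correct and follows exactly the paper's route: the paper derives Theorem~\ref{th:LC} directly from Lemma~\ref{lem:kmers} combined with the two trivial counting bounds $\sigma^k$ and $n-k+1$ on the number of distinct $k$-mers, summed over $k$. Your write-up just makes explicit what the paper leaves as an immediate corollary.
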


The linguistic complexity $LC(S)$ of a string $S\in [1,\sigma]^n$, as defined in \cite{troyanskaya2002sequence}, is the rate between $|SUB_S|$ and the maximum number $\sum_{k=1}^n\min\{\sigma^k, n-k+1\}$  of distinct substrings that could appear in a string from $[1,\sigma]^n$. $LC(S)$ is often used as an index of string repetitiveness. Theorem \ref{th:LC} yields an upper- bound for $LC(S)$ as a function of $\gamma^*$.

\begin{example}
	For the string $S$ in example \ref{ex:attr}, we have $n=12$, $\sigma=4$, $\gamma^*=4$, and $|SUB_S|=57$. Then, $LC(S) \approx 0.814$ and Theorem \ref{th:LC} gives the bound $LC(S)\leq 0.9572$.
\end{example}

The length $\ell_{max}$ of the longest repeated substring is also indicative of the string's repetitiveness. In the following lemma we show that a smaller minimum attractor implies a larger $\ell_{max}$. Equivalently, we give a lower-bound on the size of the minimum attractor as a function of $\ell_{max}$.

\begin{lemma}\label{th:LR}
	Let $S \in [1,\sigma]^n$ be a string with minimum attractor's size $\gamma^*$ and longest repeated substring's length equal to $\ell_{max}$. Then, it holds $\ell_{max} \geq \frac{n-\gamma^*}{\gamma^*+1}$ and, equivalently, $\gamma^* \geq \frac{n-\ell_{max}}{\ell_{max}+1}$.
\end{lemma}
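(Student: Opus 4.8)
The plan is to combine a pigeonhole argument on the ``gaps'' between attractor positions with a direct appeal to the defining property of a string attractor. Fix a smallest attractor $\Gamma = \{j_1 < j_2 < \cdots < j_{\gamma^*}\}$ of $S$. Deleting these $\gamma^*$ positions from $[1,n]$ leaves $n-\gamma^*$ positions, which split into at most $\gamma^*+1$ maximal runs of consecutive non-attractor positions, namely $[1,j_1-1],\ [j_1+1,j_2-1],\ \dots,\ [j_{\gamma^*}+1,n]$ (some possibly empty). By pigeonhole, the longest such run $[a,b]$ satisfies $b-a+1 \ge \lceil (n-\gamma^*)/(\gamma^*+1)\rceil$, and this run is nonempty whenever $n>\gamma^*$; when $n\le\gamma^*$ the claimed inequality is vacuous, since its right-hand side is then nonpositive, so we may assume $n>\gamma^*$.

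Next I would observe that the occurrence of the substring $S[a..b]$ at the position range $[a,b]$ contains, by construction of $[a,b]$, no element of $\Gamma$. But Definition~\ref{def: string attractor} demands that $S[a..b]$ have \emph{some} occurrence $S[i'..j'] = S[a..b]$ with $j_k\in[i',j']$ for some $j_k\in\Gamma$. Since $[a,b]$ contains no attractor element while $[i',j']$ does, these two occurrences are at distinct locations; hence $S[a..b]$ is a repeated substring. Therefore $\ell_{max} \ge b-a+1 \ge \frac{n-\gamma^*}{\gamma^*+1}$, which is the first bound.

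The equivalent form is then pure algebra: from $\ell_{max}(\gamma^*+1)\ge n-\gamma^*$ one rearranges to $\gamma^*(\ell_{max}+1)\ge n-\ell_{max}$, i.e.\ $\gamma^* \ge \frac{n-\ell_{max}}{\ell_{max}+1}$.

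I do not expect a genuine obstacle here; the only point requiring a little care is the degenerate regime (ensuring the longest gap is actually nonempty), which is dispatched by noting the bound is trivial when $n\le\gamma^*$. All of the real content sits in the single observation that a maximal attractor-free block of positions must, by the attractor axiom, also occur elsewhere, so the largest such block is a lower bound on $\ell_{max}$.
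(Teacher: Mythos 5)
Your proof is correct and follows essentially the same route as the paper's: a pigeonhole argument on the gaps between attractor positions, combined with the observation that a maximal attractor-free block must, by the attractor definition, have a second occurrence crossing an attractor element and hence be repeated. Your version is in fact slightly more careful than the paper's (explicitly counting the boundary runs $[1,j_1-1]$ and $[j_{\gamma^*}+1,n]$ and dispatching the degenerate case $n\le\gamma^*$), but the underlying idea is identical.
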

\begin{proof}
	Consider the maximum distance $j-i$ between any two positions $i,j$ in the smallest attractor for $S$. Then, by Definition \ref{def: string attractor}, the substring $S[i+1..j-1]$ of length $j-i-1$ between these two positions is not covered by any attractor's element. This means that $S[i+1..j-1]$ must have another occurrence crossing some other attractor element, which implies that it is repeated. The smallest maximum distance between the $\gamma^*$ attractor's elements is obtained when they are equally spaced. The claim easily follows after observing that the repeated substring should not cross any attractor's element ($n-\gamma^*$ positions are therefore left), and the $\gamma^*$ elements break the text in $\gamma^*+1$ factors. \qed
\end{proof}

\subsection{Reductions from compressors}\label{sec:compressors->SA}

In the next theorems we show reductions from all known compressors for repetitive strings to string attractors.
We start with a well-known lemma that will be used later:

\begin{lemma}\label{lem: grammars}
	Let $G = \{X_i \rightarrow A_iB_i,\ i=1,...,g'\} \cup \{Y_i \rightarrow Z_i^{\ell_i},\ \ell_i\geq 2,\  i=1,...,g''\}$ be a run-length grammar generating $T\in\Sigma^n$. For any substring $T[i..j]$ one of the following is true:
	\begin{enumerate}
		\item there exists a rule $X_k \rightarrow
		A_kB_k$ such that $T[i..j]$ is composed of a non-empty suffix of the expansion of $A_k$ followed by a non-empty prefix of the expansion of $B_k$, or
		\item there exists a rule $Y_k \rightarrow
		Z_k^{\ell_k}$ such that $T[i..j]$ is composed of a non-empty suffix of the expansion of $Z_k$ followed by a non-empty prefix of the expansion of $Z_k^{\ell_k-1}$
	\end{enumerate}
	Note that if the grammar does not allow run-length rules, then (1) must always be true for any substring.
\end{lemma}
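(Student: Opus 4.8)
The plan is to argue via the derivation (parse) tree $\mathcal{T}$ of the run-length grammar $G$. Recall the structure of $\mathcal{T}$: its leaves read left to right spell out $T$; each internal node is labeled by a nonterminal; a node labeled $X_k$ with rule $X_k\to A_kB_k$ has exactly two children, labeled $A_k$ and $B_k$; and a node labeled $Y_k$ with rule $Y_k\to Z_k^{\ell_k}$ has exactly $\ell_k$ children, all labeled $Z_k$. Each node $u$ covers a contiguous range of leaf positions, and the string in that range is exactly the expansion $\mathrm{exp}(\mathrm{label}(u))$ (a single character if the label is a terminal).

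First I would fix a substring $T[i..j]$ with $j>i$ (the length-$1$ case $j=i$ is trivial and can be excluded by convention) and let $v$ be the lowest node of $\mathcal{T}$ whose covered range contains $[i..j]$, i.e. the lowest common ancestor of the leaves $i$ and $j$. Since $v$ is a common ancestor of two distinct leaves it is internal, and by minimality of $v$ the range $[i..j]$ is not contained in the covered range of any single child of $v$ (otherwise that child would be a lower node containing $[i..j]$). Hence $[i..j]$ meets at least two children: $i$ lies in the leftmost child it meets, which it covers from $i$ to that child's right end (a nonempty suffix of that child's expansion), and $j$ lies in a strictly later child, which it covers from that child's left end to $j$ (a nonempty prefix of that child's expansion).

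Then I would split on the form of the rule applied at $v$. If $v$ is labeled $X_k$ with $X_k\to A_kB_k$, then $v$ has only the children $A_k$ and $B_k$, so $i$ lies in the $A_k$-child and $j$ in the $B_k$-child, giving case~(1). If $v$ is labeled $Y_k$ with $Y_k\to Z_k^{\ell_k}$, let its children be copies $1,\dots,\ell_k$ of $Z_k$, let $p$ be the leftmost copy meeting $[i..j]$, and let $q$ be the copy containing $j$, so $p<q$. Then $T[i..j]$ begins with a nonempty suffix of $\mathrm{exp}(Z_k)$ (its part inside copy $p$), and the remainder, running from the left end of copy $p{+}1$ through $j$, is a nonempty prefix of $\mathrm{exp}(Z_k^{\,q-p})$; since $p\ge 1$ and $q\le \ell_k$ we have $q-p\le\ell_k-1$, so this remainder is also a nonempty prefix of $\mathrm{exp}(Z_k^{\,\ell_k-1})$ (which is well defined and nonempty as $\ell_k\ge 2$), giving case~(2). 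The final sentence is immediate: if $G$ has no run-length rules then every internal node of $\mathcal{T}$ is binary, so $v$ always falls in the first case and (1) holds for every substring.

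The conceptual content is light; the only care needed is the index bookkeeping in the run-length case — in particular the inequalities $p\ge1$ and $q\le\ell_k$ that yield $q-p\le\ell_k-1$, and the observation that a "nonempty suffix/prefix of the expansion" is vacuously fine when $A_k$, $B_k$, or $Z_k$ happens to be a terminal. I expect this bookkeeping, rather than any genuine difficulty, to be the main (minor) obstacle.
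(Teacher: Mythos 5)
Your proof is correct and follows essentially the same route as the paper's: both descend the parse tree to the lowest node whose expansion contains the substring (your LCA is exactly where the paper's top-down recursion bottoms out) and then split on whether the rule at that node is binary or run-length. The only cosmetic difference is in the run-length case, where the paper shifts to an occurrence aligned with the first copy of $Z_k$ while you observe directly that the tail is a prefix of the expansion of $Z_k^{q-p}$ and hence of $Z_k^{\ell_k-1}$; both amount to the same bookkeeping.
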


\begin{theorem}\label{th: attractor g}
	Let $G = \{X_i \rightarrow A_iB_i,\ i=1,...,g'\} \cup \{Y_i \rightarrow Z_i^{\ell_i},\ \ell_i\geq 2,\  i=1,...,g''\}$ be a run-length grammar of size $g=g'+g''$ generating $T$. $G$ induces a family of string attractors, all of size $g$.
\end{theorem}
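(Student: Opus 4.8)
The plan is to associate to the grammar $G$ a natural parse tree of $T$ and let the attractor $\Gamma$ consist of one "landmark" position inside every rule's expansion. Concretely, fix a parse tree (derivation tree) for $T$; for each rule $X_k \to A_kB_k$ pick an arbitrary occurrence of (the expansion of) $X_k$ as a substring of $T$, say $T[a..b]$, and put into $\Gamma$ the position in $T$ corresponding to the boundary between the copy of $A_k$ and the copy of $B_k$ — more precisely, the first position of the $B_k$-part. For each run-length rule $Y_k \to Z_k^{\ell_k}$, pick an occurrence of $Y_k$'s expansion $T[a..b]$ and put into $\Gamma$ the first position of the second copy of $Z_k$ (i.e. the boundary between the first $Z_k$ and the $Z_k^{\ell_k-1}$ tail). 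This gives $|\Gamma| = g' + g'' = g$. Since the choice of which occurrence of each nonterminal to use is free, this description actually yields a whole family of attractors, which is what the theorem claims.

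The core of the argument is then to verify that $\Gamma$ is a string attractor, and this is exactly where Lemma~\ref{lem: grammars} does the work. Take any substring $T[i..j]$. By Lemma~\ref{lem: grammars}, either there is a rule $X_k \to A_kB_k$ such that $T[i..j]$ equals a nonempty suffix of the expansion of $A_k$ followed by a nonempty prefix of the expansion of $B_k$, or there is a run-length rule $Y_k \to Z_k^{\ell_k}$ with an analogous decomposition straddling the boundary between the first $Z_k$ and $Z_k^{\ell_k-1}$. In the first case, consider the chosen reference occurrence $T[a..b]$ of $X_k$: inside it, the suffix of $A_k$'s expansion immediately followed by the prefix of $B_k$'s expansion is an occurrence of $T[i..j]$, and it crosses the landmark position we placed at the $A_k$/$B_k$ boundary, because the $A_k$-suffix is nonempty and the $B_k$-prefix is nonempty. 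The run-length case is symmetric, using the landmark at the boundary after the first $Z_k$. Hence every distinct substring of $T$ has an occurrence crossing an element of $\Gamma$, so $\Gamma$ is a string attractor of size $g$.

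There is one subtlety to handle carefully: one must make sure the reference occurrences $T[a..b]$ actually exist and that "the boundary between $A_k$'s copy and $B_k$'s copy" is a well-defined text position. This is fine because $G$ generates $T$, so every nonterminal reachable from the start symbol expands to a concrete substring occurrence of $T$ (nonterminals not reachable from the start symbol can simply be discarded, only shrinking $g$); and inside such an occurrence the split point between the expansions of $A_k$ and $B_k$ (resp. between the first $Z_k$ and the rest) is determined by the lengths of the expansions. I would also note the degenerate edge cases — expansions of length $1$ cannot be split, but then the "suffix followed by prefix" decomposition in Lemma~\ref{lem: grammars} forces both parts to be nonempty, so the relevant expansions have length at least $2$ and the landmark sits strictly inside. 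I expect the only real obstacle to be bookkeeping: stating precisely which occurrence is chosen for each nonterminal and checking the "nonempty suffix / nonempty prefix" condition translates exactly into "the occurrence crosses the landmark," but conceptually the proof is just Lemma~\ref{lem: grammars} plus this choice of landmarks.
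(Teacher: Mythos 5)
Your proposal is correct and follows essentially the same route as the paper: pick one occurrence of each rule's expansion, place a landmark at the $A_k/B_k$ (resp.\ $Z_k/Z_k^{\ell_k-1}$) boundary, and invoke Lemma~\ref{lem: grammars} to conclude that every substring has an occurrence crossing such a boundary. The only (immaterial) difference is that you place the landmark on the first position of the $B_k$-part while the paper uses the last position of the $A_k$-part; both are covered by the nonempty-suffix/nonempty-prefix decomposition.
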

\begin{proof}
	Start with an empty string attractor $\Gamma_G = \emptyset$, and repeat the following  for $k=1,...,g'$. Choose any of the expansions $T[i..j]$ of $X_k$. By the production $X_k \rightarrow A_kB_k$, $T[i..j]$ can be factored as $T[i..j] = T[i..i']T[i'+1..j]$, where $T[i..i']$ and $T[i'+1..j]$ are expansions of $A_k$ and $B_k$, respectively. Insert position $i'$ in $\Gamma_G$.
	
	Now, repeat the following  for $k=1,...,g''$.
	Choose any of the expansions $T[i..j]$ of $Y_k$. By the production $Y_k \rightarrow Z_k^{\ell_k}$, $T[i..j]$ can be factored as $T[i..j] = T[i..i']T[i'+1..j]$, where $T[i..i']$ and $T[i'+1..j]$ are expansions of $Z_k$ and $Z_k^{\ell_k-1}$, respectively. Insert position $i'$ in $\Gamma_G$.

	To see that $\Gamma_G$ is a valid string attractor of size $g$, consider any  substring $T[i..j]$. By Lemma \ref{lem: grammars}, $T[i..j]$ has an occurrence $T[i'..j']$ spanning the expansion of some $A_k|B_k$ or of some $Z_k|Z_k^{\ell_k-1}$, where we highlighted with a vertical bar the crossing point. By the way we constructed $\Gamma_G$, this prefix/suffix decomposition of $T[i'..j']$ crosses (with the same split) one of the elements in $\Gamma_G$ (i.e. the one associated with production  $X_k \rightarrow
	 A_kB_k$ or $Y_k \rightarrow Z_k^{\ell_k}$). \qed
\end{proof}

Clearly, Theorem \ref{th: attractor g} applies also to grammars not allowing run-length rules: in this case, $g''=0$ and $g=g'$. Note that the theorem captures also LZ78, Lempel-Ziv Welch (LZW), and run-length encoding (since they are just particular grammars). We now show that also macro schemes are particular string attractors. 

\begin{theorem}\label{th: attractor MS}
	Let MS be a macro scheme of size $b$ of $T$. MS induces a string attractor $\Gamma_{MS}$ of size at most $2b$.
\end{theorem}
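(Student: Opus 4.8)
The plan is to mimic the construction used for grammars in Theorem~\ref{th: attractor g}, now using the directives of the macro scheme instead of the grammar rules. The key structural fact we will exploit is the one behind Lemma~\ref{lem: grammars}: any substring $T[i..j]$ of $T$ has an occurrence that is ``anchored'' at a boundary of the macro scheme. More precisely, recall that the left-hand sides $T[i_1..j_1],\dots,T[i_b..j_b]$ of the directives form (in the bidirectional-parse case) or at least cover (in general) the text; the relevant observation is that for every substring $T[i..j]$ there is an occurrence $T[i'..j']=T[i..j]$ such that $T[i']$ is either an explicitly stored character or the first copied character of some directive whose source copies text lying strictly before it. We will make this precise by a ``follow the pointers'' argument: if the leftmost character of a given occurrence of $T[i..j]$ lies in the interior of the right-hand side region of a copy directive, then that whole occurrence is copied verbatim from an earlier location, and we may replace it by that earlier occurrence; iterating, and using that the macro scheme is well-defined (has finite height, so the process terminates), we eventually reach an occurrence whose left endpoint coincides with a directive boundary or an explicit character.

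Given this, the construction of $\Gamma_{MS}$ is as follows. For each directive $d$ of the macro scheme we add at most two positions. If $d$ is an explicit-character directive $T[i]\leftarrow c$, we add the single position $i$. If $d$ is a copy directive $T[i..j]\leftarrow T[i'..j']$, we add the two \emph{source} endpoints $i'$ and $j'$ (equivalently, one could add the boundary $i$ together with one more position; the cleanest choice is to place the two attractor positions so that every substring anchored at the left boundary of $d$ has, after one step of the pointer-following argument, an occurrence crossing one of them). This gives $|\Gamma_{MS}|\le 2b$. To verify correctness, take any substring $T[i..j]$ and, by the anchoring fact above, choose an occurrence $T[i'..j']$ whose left endpoint $i'$ is a directive boundary. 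If that boundary is an explicit character, the occurrence crosses the attractor position we placed there. Otherwise $i'$ is the left boundary of a copy directive $T[p..q]\leftarrow T[p'..q']$; the portion of the occurrence lying inside $[p..q]$ is copied from $[p'..q']$, and the portion (possibly empty) extending past $q$ continues into the next region. The pair of positions $p',q'$ we inserted for this directive was chosen exactly so that the translated copy of the in-directive part, glued to the untranslated tail, contains one of $p',q'$ in its interior; hence $T[i..j]$ has an occurrence crossing an attractor element.

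The main obstacle is getting the anchoring argument and the placement of the two positions per copy directive exactly right, in particular handling the case where an occurrence of $T[i..j]$ straddles the right boundary $q$ of a copy directive: the prefix inside $[p..q]$ is relocated to $[p'..q']$ while the suffix past $q$ stays put, so the single ``copied'' occurrence we obtain is not contiguous in the source — it is a copy of a suffix of $[p'..q']$ followed by whatever lies after position $q$. This is the analogue of the $A_k|B_k$ split in Lemma~\ref{lem: grammars}, and it is precisely why two positions (the two endpoints $p'$ and $q'$ of the source interval, or the boundary $p$ plus the point $q$) rather than one are needed for a copy directive: one position captures occurrences wholly inside the directive and anchored at its left, the other captures occurrences that spill over the right end. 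Once this bookkeeping is set up, finiteness of the macro scheme's height guarantees the pointer-following terminates, and the bound $2b$ follows immediately. \qed
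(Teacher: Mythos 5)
There is a genuine gap, and it sits exactly where you flag ``the main obstacle'': the choice of which two positions per copy directive go into $\Gamma_{MS}$, and the verification of the straddling case. You insert the \emph{source} endpoints $p',q'$ of each directive $T[p..q]\leftarrow T[p'..q']$, and for an occurrence $T[a..b]$ anchored at $a=p$ that spills past $q$ you argue that ``the translated copy of the in-directive part, glued to the untranslated tail, contains one of $p',q'$.'' But that glued object is not a contiguous interval of $T$: the directive only guarantees $T[p..q]=T[p'..q']$, so the relocated prefix lives at $[p'..q']$ while the tail stays at $[q+1..b]$, and nothing forces $T[q'+1..\,]$ to continue with $T[q+1..b]$. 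Hence you have not exhibited any occurrence of $T[a..b]$ crossing $p'$ or $q'$, and the claim is asserted rather than proved. The paper's construction avoids this entirely by putting the \emph{target} endpoints $p$ and $q$ (plus each explicitly stored position) into the attractor: then an occurrence that touches a target boundary of the governing directive is primary \emph{by definition} (it already contains an attractor position), and an occurrence that touches no such boundary is strictly interior to $[p..q]$ and is therefore relocated verbatim as a single contiguous interval. Your parenthetical alternative ``the boundary $p$ plus the point $q$'' is the right choice, but the proposal never commits to it nor runs the argument with it.

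Two smaller issues. First, your anchoring claim speaks of a directive ``whose source copies text lying strictly before it'' and of the leftmost character lying in the ``right-hand side region''; macro schemes are bidirectional (sources may lie after targets, and targets may overlap), so the argument must not assume any ordering, and the relevant region is the left-hand side (target) of a directive. Second, the termination argument needs to be tied to something that provably decreases: the paper fixes one position $p_1$ inside the substring, takes its finite derivation chain $T[p_1]\leftarrow\cdots\leftarrow T[p_{k'}]\leftarrow c$ (finite because the scheme is valid), and shows the interval-relocation chain either hits a primary occurrence within $k'$ steps or ends at the explicit position $p_{k'}$, which is itself in $\Gamma_{MS}$. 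Invoking ``finite height'' is the right intuition, but it only applies once each relocation step is known to move the whole occurrence (which, again, requires the target-endpoint dichotomy).
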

\begin{proof}
	
	Let $T[i_{k_1}..j_{k_1}] \leftarrow T[i'_{k_1}..j'_{k_1}],\ T[q_{k_2}] \leftarrow c_{k_2}$, with $1\leq k_1 \leq b_1$, $1\leq k_2 \leq b_2$, and $b=b_1+b_2$ be the $b$ directives of our macro scheme MS.
	We claim that $\Gamma_{MS} = \{i_1, \dots, i_{b_1}, j_1, \dots, j_{b_1}, q_1, \dots, q_{b_2}\}$ is a valid string attractor for $T$.
	
	Let $T[i..j]$ be any substring. All we need to show is that $T[i..j]$ has a \emph{primary occurrence}, i.e. an occurrence containing one of the positions $i_{k_1}$, $j_{k_1}$ or $q_{k_2}$.
	Let $i_1=i$ and $j_1 = j$. Consider all possible chains of copies (following the macro scheme directives) $T[i_1..j_1] \leftarrow T[i_2..j_2] \leftarrow T[i_3..j_3] \leftarrow ...$ that either end in some primary occurrence $T[i_k..j_k]$ or are infinite (note that there could exist multiple chains of this kind since the left-hand side terms of some macro scheme's directives could overlap).
	Our goal is to show that there must exist at least one finite such chain, i.e. that ends in a primary occurrence. Pick any $i_1\leq p_1 \leq j_1$. Since ours is a valid macro scheme, then $T[p_1]$ can be retrieved from the scheme, i.e. the directives induce a finite chain of copies  $T[p_1] \leftarrow  ... \leftarrow T[p_{k'}] \leftarrow c$, for some $k'$, such that $T[p_{k'}] \leftarrow c$ is one of the macro scheme's directives. We now show how to build a finite chain of copies $T[i_1..j_1] \leftarrow T[i_2..j_2] \leftarrow ... \leftarrow T[i_k..j_k]$ ending in a primary occurrence $T[i_k..j_k]$ of $T[i_1..j_1]$, with $k \leq k'$. By definition, the assignment $T[p_1] \leftarrow T[p_2]$ comes from some macro scheme's directive $T[l_1..r_1] \leftarrow T[l_2..r_2]$ such that $p_1 \in  [l_1,r_1]$ and $p_1-l_1 = p_2-l_2$ (if there are multiple directives of this kind, pick any of them). If either $l_1\in [i_1,j_1]$ or $r_1\in [i_1,j_1]$, then $T[i_1,j_1]$ is a primary occurrence and we are done. Otherwise, we set $i_2 = l_2 + (i-l_1)$ and $j_2 = l_2 + (j-l_1)$. By this definition, we have that $T[i_1..j_1] = T[i_2..j_2]$ and $p_2\in [i_2,j_2]$, therefore we can extend our chain to $T[i..j] \leftarrow T[i_2..j_2]$. It is clear that the reasoning can be repeated, yielding that either $T[i_2..j_2]$ is a primary occurrence or our chain can be extended to $T[i..j] \leftarrow T[i_2..j_2] \leftarrow T[i_3..j_3]$ for some substring $T[i_3..j_3]$ such that $p_3 \in [i_3,j_3]$. We repeat the construction for $p_4, p_5, ...$ until either (i) we end up in a chain $T[i..j] \leftarrow ... \leftarrow T[i_k..j_k]$, with $k<k'$, ending in a primary occurrence $T[i_k..j_k]$ of $T[i_1..j_1]$, or (ii) we obtain a chain $T[i_1..j_1] \leftarrow ... \leftarrow T[i_{k'}..j_{k'}]$ such that $p_{k'}\in [i_{k'}, j_{k'}]$ (i.e. we consume all the $p_1,\dots, p_{k'}$). In case (ii), note that $T[p_{k'}] \leftarrow c$ is one of the macro scheme's directives, therefore $T[i_{k'}..j_{k'}]$ is a primary occurrence of $T[i_1..j_1]$. \qed
\end{proof}

Since LZ77 is a particular case of macro scheme, we obtain that it induces a string attractor of size at most $2z$. We can achieve a better bound by exploiting the so-called \emph{primary occurrence} property of LZ77:

\begin{lemma}\label{th: attractor LZ77}
	The Lempel-Ziv factorization of size $z$ of a string $T$ induces a string attractor $\Gamma_{LZ77}$ of size at most $z$.
\end{lemma}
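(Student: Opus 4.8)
The plan is to exploit the well-known \emph{primary occurrence} property of the Lempel--Ziv factorization: in LZ77 (without self-references), every phrase boundary is a position such that any occurrence of a substring that is entirely contained inside a single phrase also occurs earlier, crossing a phrase boundary. Concretely, I would let $T = w_1 w_2 \cdots w_z$ be the LZ77 parse, and for each phrase $w_k$ let $j_k$ be its \emph{last} position (equivalently, one could take the first position of each phrase; the last position is the more convenient choice since each non-first phrase is, by definition, a copy of an earlier substring, and that earlier source together with the following character forms the phrase). I claim $\Gamma_{LZ77} = \{j_1, \dots, j_z\}$ is a string attractor, which immediately gives size at most $z$.

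The key step is the standard argument that every substring $T[i..j]$ has a \emph{primary occurrence} crossing some phrase boundary. Take an occurrence $T[i..j]$ and consider the phrase $w_k$ containing position $i$. If $w_k$ also contains a phrase boundary within $[i,j]$ — in particular if $j$ falls in a later phrase, so that $T[i..j]$ spans the boundary $j_k$ — we are done. Otherwise $T[i..j]$ is contained strictly inside $w_k$ and does not touch position $j_k$; but then, since $w_k$ (if it is not the first phrase) is by definition a copy of an earlier occurrence (its source, which lies entirely to the left), $T[i..j]$ has a strictly earlier occurrence $T[i'..j']$ with $j' < i$. If $w_k$ is the first phrase, it is a single character, so any substring inside it is that single character, which occupies position $j_1$. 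Iterating this leftward-shifting argument — the occurrence strictly decreases its starting position each time it fails to cross a boundary — it must terminate, and it can only terminate at an occurrence that does cross a phrase boundary, i.e. contains some $j_k$. This is the primary occurrence.

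The main obstacle, and the only point requiring care, is making precise the case analysis at the phrase containing the current occurrence's left endpoint and verifying that the leftward shift is well-defined and strictly monotone: when $T[i..j]$ sits inside phrase $w_k = T[a..b]$ and $w_k$ copies the source $T[a'..b']$ (with $a' \le b' < a$ and $b - a = b' - a' + 1$, the extra character being the trailing symbol of the phrase), the corresponding shifted occurrence is $T[i - (a - a') \,..\, j - (a - a')]$, and one must check this lies within or before the source region so that the new starting position is $< i$. A subtlety is that the substring could extend into the single trailing character of the phrase, but since the trailing character is only one position it still falls within $w_k$ and the occurrence touching the very last position $j_k$ of the phrase is already primary — so this case causes no trouble. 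Once primary occurrences are established, the conclusion is immediate: such an occurrence contains a last-position $j_k$ of some phrase, so $\Gamma_{LZ77}$ meets Definition~\ref{def: string attractor}. \qed
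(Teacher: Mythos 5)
Your proof is correct and takes essentially the same route as the paper: place the last position of each phrase in $\Gamma_{LZ77}$ and invoke the primary-occurrence property of LZ77, which the paper simply cites as well known while you spell out via the standard leftward-shifting argument. The only cosmetic mismatch is that you describe phrases as ``copy plus one trailing character,'' whereas the paper fixes the variant without trailing characters; your argument goes through unchanged in that variant since each non-explicit phrase is then entirely a copy of an earlier occurrence.
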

\begin{proof}
	We insert in $\Gamma_{LZ77}$ all positions at the end of a phrase. It is well known (see, e.g.~\cite{kreft2013compressing}) that every text substring has an occurrence crossing a phrase border (these occurrences are usually known as \emph{primary}), therefore we obtain that  $\Gamma_{LZ77}$ is a valid attractor for $T$. \qed
\end{proof}

The run-length Burrows-Wheeler transform seems a completely different paradigm for compressing repetitive strings: while with grammars and macro schemes we copy substrings around in the string, with the RLBWT we build a string permutation by concatenating characters preceding lexicographically-sorted suffixes, and then run-length compress it. This strategy is motivated by the fact that similar substrings (adjacent in the lexicographic order) are often preceded by the same character, therefore the BWT contains long runs of the same letter if the string is repetitive~\cite{makinen2010storage}. We denote with $r$ the number of equal-letter runs in the BWT. The following theorem holds (we use a technique originally developed in~\cite{policritiAlgo17,PP16,Pre16} to get our bound):

\begin{theorem}\label{th: attractor RLBWT}
	Let $r$ be the number of equal-letter runs in the Burrows-Wheeler transform $BWT(T)$ of a string $T$. $BWT(T)$ induces a string attractor $\Gamma_{RLBWT}$ of size at most $2r$.
\end{theorem}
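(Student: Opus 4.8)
The plan is to build the attractor $\Gamma_{RLBWT}$ from the positions where runs begin in $BWT(T)$, mapped back into the text, together with (at most) one additional position per run coming from the LF-step structure. The underlying intuition is the classical one exploited in run-length BWT indexing: any text substring $T[i..j]$ corresponds to a contiguous range of rows in the BWT matrix (the rows whose suffixes are prefixed by $T[i..j]$), and as we extend the substring leftward character by character via backward search, the corresponding row-range can be ``aligned'' so that it always touches a run boundary. Concretely, for each of the $r$ runs, let its first row be $p$; the position of $BWT[p]$ inside the text is some text position, and I will put that text position (and possibly its image under one LF step, to account for the suffix-vs-preceding-character offset) into $\Gamma_{RLBWT}$. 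This yields at most $2r$ text positions.

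The key steps, in order: (1) Recall that a substring $W = T[i..j]$ of length $\llen = j-i+1$ induces a nonempty BWT row-interval $[sp,ep]$ consisting of exactly the rotations of $T$ beginning with $W$; each such rotation starts at some text position, and the character $BWT[k]$ for $k\in[sp,ep]$ is $T$'s character immediately preceding that occurrence of $W$. (2) Observe that if the interval $[sp,ep]$ contains a run boundary of $BWT(T)$ — i.e. there is a $k\in[sp,ep]$ with $k$ the first position of a run — then the text position at which that occurrence of $W$ starts has, one position to its left, a text position that we placed in $\Gamma_{RLBWT}$; so $W$ has an occurrence whose immediately-preceding position is in the attractor, and by shifting by one we get an occurrence of $W$ itself crossing an attractor element. (3) The heart of the argument: show that for \emph{every} distinct substring $W$, some occurrence of $W$ has this property. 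Argue by induction on $\llen$. For $\llen=1$ the interval $[sp,ep]$ is an $F$-range of a single character, which necessarily contains a run boundary of $BWT(T)$ as long as we also mark, for each run, the position reached by a single LF step (this handles the base case where the single-character range straddles run boundaries in $F$ rather than in $BWT$). For the inductive step, take $W = aW'$ with $a\in\Sigma$ and $W'$ shorter. By induction $W'$ has an occurrence crossing an attractor position; using the LF-mapping correspondence between the row-interval of $W'$ and that of $W=aW'$, and the fact that LF maps run-structure to run-structure (a run boundary either maps to a run boundary or can be ``charged'' to one of the two marked positions of that run), conclude that the row-interval of $W$ also meets a marked position — hence $W$ has a primary-type occurrence adjacent to $\Gamma_{RLBWT}$.

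The main obstacle I expect is making step (3) precise: the clean statement ``the BWT interval of $W$ contains a run boundary'' is not literally preserved under backward search (extending $W'$ to $aW'$ restricts the interval), so one cannot simply inherit a run boundary from $W'$'s interval. The fix is to phrase the invariant in terms of LF: $LF$ maps the interval of $W'$ \emph{onto} the interval of $W = aW'$ restricted to the run(s) labeled $a$, and $LF$ is order-preserving within each letter-block and moves run boundaries of $BWT$ to consecutive positions; so the marked set — run-starts of $BWT$ plus their LF-images — is closed enough under this operation that a marked position inside the $a$-portion of $W'$'s interval lands inside $W$'s interval. Getting the off-by-one bookkeeping right (the attractor stores \emph{text} positions, the run boundaries live in BWT-space, and there is a one-character shift between ``occurrence of $W$'' and ``character preceding that occurrence'') is the fiddly part, but it only costs the extra factor of $2$ already built into the bound $2r$. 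Once the invariant is established, the conclusion that $\Gamma_{RLBWT}$ of size at most $2r$ is a valid string attractor follows immediately from Definition~\ref{def: string attractor}. \qed
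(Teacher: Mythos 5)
Your overall strategy --- induct on the substring length, track the BWT interval under leftward extension, and use run boundaries to locate a primary occurrence --- is in the same spirit as the paper's proof, but your marked set and your inductive invariant have a genuine gap that the ``fix'' you sketch does not close. You mark, for each run, the text position of its \emph{first} row plus the LF-image of that position, and you want the invariant ``the BWT interval of $W$ contains a marked row'' to be preserved when $W'$ is extended to $W=aW'$. It is not. Two things go wrong. First, the marked row inside the interval $[sp,ep]$ of $W'$ may carry a BWT character different from $a$, in which case its LF-image does not land in the interval of $aW'$ at all, so nothing is inherited. Second --- and this is the case your construction cannot repair --- the rows of $[sp,ep]$ labeled $a$ may all belong to a single maximal run of $a$'s that begins strictly \emph{before} $sp$ and merely ends inside (or after) the interval. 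Then no run start lies in $[sp,ep]$, and the LF-image of that run's start is strictly smaller than every row of $W$'s interval (LF is order-preserving on the occurrences of a fixed letter), so it misses the interval as well. With only run starts and their LF-images marked, you are left with nothing to point to, and this is not an ``off-by-one bookkeeping'' issue but a failure of the closure property your step (3) relies on.

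The paper avoids both problems by (a) marking the text positions of both the \emph{first and the last} character of every run (still at most $2r$ positions), and (b) using a text-level invariant --- ``$W'=T[i+1..j]$ has an occurrence $T[i'+1..j']$ crossing a marked text position $p$'' --- together with a two-case analysis on the interval $[sp,ep]$ of $W'$. If $BWT[sp..ep]$ consists entirely of $a$'s, the known occurrence of $W'$ extends one character to the left to an occurrence of $W$ crossing the \emph{same} position $p$; no LF-image is needed. Otherwise the interval contains both $a$ and some other letter, hence some maximal run of $a$'s must begin \emph{or end} inside $[sp,ep]$, and the text position of that boundary character is marked and is the first character of a fresh occurrence of $W$. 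Note that the inductive hypothesis is used only in the monochromatic case; in the mixed case the occurrence is produced from scratch rather than propagated through LF. If you replace your marked set by ``first and last position of each run'' and your invariant by the text-level one, your argument becomes the paper's; as written, the inductive step does not go through.
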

\begin{proof}
	We insert in $\Gamma_{RLBWT}$ all positions $i$ such that $T[i]$ is the first or last character in its BWT equal-letter run. The size of $\Gamma_{RLBWT}$ is at most $2r$ (less if there are runs of length 1). We now show that $\Gamma_{RLBWT}$ is a valid string attractor.

	Consider any substring $T[i..j]$. We want to show that $T[i..j]$ has an occurrence $T[i'..j'] = T[i..j]$ crossing an element in $\Gamma_{RLBWT}$. We prove the theorem by induction on $j-i$. If $j-i=0$, then $T[i..j]$ is a single character. Since we pick the first and last position of each BWT run, each character appears in at least one attractor position, and the thesis easily follows.
	
	Let $j-i > 0$. By the inductive hypothesis, $T[i+1..j]$ has an occurrence $T[i'+1..j']$ crossing an attractor element $i'+1\leq p \leq j'$. Let $[sp,ep]$ be the $\BWT$ range of $T[i+1..j]$. We distinguish two cases. 
	
	(i) All characters in $\BWT[sp,ep]$ are equal to $T[i]=T[i']$. Then, $T[i'..j']$ is an occurrence of $T[i..j]$ crossing the attractor element $p$.
		
	(ii) $\BWT[sp,ep]$ contains at least one character $c\neq T[i]$. Then, there must be a run of $T[i]$'s ending or beginning in $\BWT[sp,ep]$, meaning that there is a $sp \leq q \leq ep$ such that $\BWT[q] = T[i]$ and the position $p$ corresponding to $q$ belongs to $\Gamma_{RLBWT}$. Then, $T[p..p+(j-i)]$ is an occurrence of $T[i..j]$ crossing the attractor element $p$. \qed

\end{proof}

To conclude, any path-compressed automaton recognizing the string's suffixes also induces a string attractor of the same size. 

\begin{theorem}\label{th: attractor automaton}
	Let $e$ be the number of edges of a path-compressed automaton $\mathcal A$ recognizing all (and only the) substrings of a string $T$. $\mathcal A$ induces a family of string attractors, all of size $e$.
\end{theorem}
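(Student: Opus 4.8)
The plan is to associate one attractor position with each edge of $\mathcal A$, exactly as in the proof of Theorem~\ref{th: attractor g} where one position was attached to each rule. Let $s$ be the source of $\mathcal A$, and for a node $v$ let $R(v)$ be the set of strings labelling a path from $s$ that ends precisely at $v$. As is standard for a path-compressed automaton recognizing the substrings of $T$ (the $\mathrm{endpos}$-equivalence of the suffix automaton/CDAWG, or the uniqueness of path labels in the suffix tree), $R(v)$ is non-empty, its elements are totally ordered by the suffix relation, and I will write $\mathrm{long}(v)$ for its longest element, so that $\mathrm{long}(s)=\varepsilon$. For every edge $e=(u,v)$ with label $\omega$, the string $\mathrm{long}(u)\,\omega$ labels a path from $s$ ending at $v$; since every state of $\mathcal A$ represents a substring, $\mathrm{long}(u)\,\omega$ occurs in $T$, and I pick one occurrence $T[a..b]=\mathrm{long}(u)\,\omega$ and insert into $\Gamma_{\mathcal A}$ the starting position $a+|\mathrm{long}(u)|$ of that copy of $\omega$. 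Ranging over the choices of one occurrence per edge gives the announced family, each member of size $e$ (or fewer, if two edges select the same position).

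To prove validity I would take an arbitrary substring $S$ of $T$ and follow a path of $\mathcal A$ from $s$ that spells $S$ (one exists since $S\in SUB_T$). This path decomposes $S=\omega_1\omega_2\cdots\omega_{k-1}\,\omega_k'$, where $\omega_1,\dots,\omega_{k-1}$ are the full labels of the first $k-1$ edges traversed, $\omega_k$ is the label of the $k$-th edge $e_k=(u,v)$, and $\omega_k'$ is a non-empty prefix of $\omega_k$ (take $\omega_k'=\omega_k$ if $S$ happens to end exactly at a node; $k\ge1$ since $S\neq\varepsilon$). The prefix $\omega_1\cdots\omega_{k-1}$ spells a path from $s$ ending at $u$, hence belongs to $R(u)$ and is therefore a suffix of $\mathrm{long}(u)$: write $\mathrm{long}(u)=z\cdot\omega_1\cdots\omega_{k-1}$ with $z$ possibly empty (and $z=\varepsilon$ when $k=1$). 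In the occurrence $T[a..b]=\mathrm{long}(u)\,\omega_k=z\,\omega_1\cdots\omega_{k-1}\,\omega_k$ chosen for $e_k$, the string $S=\omega_1\cdots\omega_{k-1}\,\omega_k'$ occurs at $T[\,a+|z|\ ..\ a+|z|+|S|-1\,]$, and the position $a+|\mathrm{long}(u)|=a+|z|+|\omega_1\cdots\omega_{k-1}|$ associated with $e_k$ is exactly the first position of the (non-empty) copy of $\omega_k'$, so it lies inside this occurrence of $S$. Hence $S$ has an occurrence crossing an element of $\Gamma_{\mathcal A}$, and $\Gamma_{\mathcal A}$ is a valid string attractor of size $e$.

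The only genuinely non-routine ingredient is the structural claim that the strings reaching a node of $\mathcal A$ form a chain under the suffix order with a well-defined longest element $\mathrm{long}(\cdot)$: this is precisely where the hypothesis that $\mathcal A$ recognizes \emph{all and only} the substrings of $T$ is used, and for the concrete automata of interest (CDAWG, suffix automaton, suffix tree) it follows from the classical $\mathrm{endpos}$ machinery. Everything else is bookkeeping, and the small cases ($k=1$, i.e.\ $S$ a prefix of a first edge's label, and $S$ ending exactly at a node) should be verified but cause no trouble.
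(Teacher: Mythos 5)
Your proof is correct and follows essentially the same route as the paper's: one attractor position per edge, placed at the first character of that edge's label inside one chosen occurrence, with validity obtained from the edge decomposition of an arbitrary substring. You are in fact slightly more explicit than the paper on two points that it glosses over --- handling the fact that several strings may reach the same node (your suffix-chain/$\mathrm{long}(u)$ argument) and, by choosing an occurrence of the full string $\mathrm{long}(u)\,\omega$ rather than just its first character, avoiding the separate ``every occurrence of the prefix is followed by the rest of the label'' step that the paper uses.
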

\begin{proof}
	We call \emph{root} the starting state of $\mathcal A$.
	Start with empty $\Gamma_{\mathcal A}$. For every edge $(u,v)$ of $\mathcal A$, do the following. Let $T[i..j]$ be any occurrence of the substring read from the root of $\mathcal A$ to the first character in the label of $(u,v)$. We insert $j$ in $\Gamma_{\mathcal A}$.
	
	To see that $\Gamma_{\mathcal A}$ is a valid string attractor of size $e$, consider any substring $T[i..j]$. By definition of $\mathcal A$, $T[i..j]$ defines a path from the root to some node $u$, plus a prefix of the label (possibly, all the characters of the label) of an edge $(u,v)$ exiting from $u$. Let $T[i..k]$, $k\leq j$, be the string read from the root to $u$, plus the first character in the label of $(u,v)$. Then, by definition of $\Gamma_{\mathcal A}$ there is an occurrence $T[i'..k'] = T[i..k]$ such that $k'\in \Gamma_{\mathcal A}$. Since the remaining (possibly empty) suffix $T[k+1..j]$ of $T[i..j]$ ends in the middle of an edge, every occurrence of $T[i..k]$ is followed by $T[k+1..j]$, i.e. $T[i'..i'+(j-i)]$ is an occurrence of $T[i..j]$ crossing the attractor element $k'$. \qed
\end{proof}

The Compact Directed Acyclic Word Graph, of size $e^*$, is the smallest automaton recognizing all string's suffixes.

To conclude the section note that the above reductions, in conjunction with Theorems \ref{th:LC} and \ref{th:LR}, give the first known upper-bound to the linguistic complexity $LC(S)$ and lower bound to the length $\ell_{max}$ of the longest repeated substring of a string $S$ as a function of the output size of \emph{any} dictionary compressor.

\subsection{Reductions to compressors}

In this section we derive upper bounds on the approximation rates of some compressors for repetitive strings with respect to the smallest string attractor. We then use these bounds and the reductions of the previous section to uncover new relations between known repetitiveness measures.
The next property follows easily from Definition \ref{def: string attractor} and will be used in the proofs of the following theorems.

\begin{lemma}\label{lemma: superset}
	Any superset of a string attractor is also a string attractor. 
\end{lemma}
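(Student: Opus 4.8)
The plan is to observe that the defining property of Definition \ref{def: string attractor} is monotone under set inclusion: the occurrence witnessing the attractor condition for a given substring does not depend on the attractor itself, only on the string $T$, and the existential quantifier ``for some $j_k \in \Gamma$'' ranges over a larger set when $\Gamma$ is enlarged. So the property can only become easier to satisfy as we add positions.

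Concretely, I would let $\Gamma$ be a string attractor of $T \in \Sigma^n$ and let $\Gamma'$ be any set of positions with $\Gamma \subseteq \Gamma' \subseteq [1,n]$. To check that $\Gamma'$ satisfies Definition \ref{def: string attractor}, fix an arbitrary substring $T[i..j]$. Since $\Gamma$ is a string attractor, there exist an occurrence $T[i'..j'] = T[i..j]$ and an element $j_k \in \Gamma$ with $j_k \in [i',j']$. Because $\Gamma \subseteq \Gamma'$, this same $j_k$ lies in $\Gamma'$, and hence the very same occurrence $T[i'..j']$ witnesses the attractor condition for $T[i..j]$ with respect to $\Gamma'$. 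As $T[i..j]$ was arbitrary, $\Gamma'$ is a string attractor of $T$.

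There is essentially no obstacle here; the only point worth noting is that a superset of positions from $[1,n]$ is still a legitimate candidate attractor (it introduces no positions outside the string), which is immediate from the hypothesis $\Gamma' \subseteq [1,n]$. The lemma will be used repeatedly in the ``reductions to compressors'' subsection, where it lets us freely enlarge attractors constructed from various compressors without affecting validity.
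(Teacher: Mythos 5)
Your proof is correct and matches the paper's (implicit) argument exactly: the paper states only that the lemma ``follows easily from Definition~\ref{def: string attractor}'', and the monotonicity of the existential condition under set inclusion is precisely that easy argument. Nothing is missing.
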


We now show that we can derive a bidirectional parse from a string attractor. 

\begin{theorem}\label{th: attractor -> macro}
	Given a string $T\in\Sigma^n$ and a string attractor $\Gamma$ of size $\gamma$ for $T$, we can build a bidirectional parse for $T$ of size $\bigO(\gamma \log (n/\gamma))$.
\end{theorem}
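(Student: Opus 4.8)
I want to build a bidirectional parse whose phrases each lie entirely to one side of (or touch) an attractor position, so that every phrase can be redirected to a primary occurrence crossing $\Gamma$. The natural tool is a hierarchical (balanced) decomposition of $[1,n]$ into blocks: think of a perfectly balanced binary tree over the $n$ text positions, of height $\bigO(\log n)$. At each level, the text is cut into blocks of a fixed length; by Definition~\ref{def: string attractor}, any block that does \emph{not} contain an attractor position has another occurrence that \emph{does} cross one of the $\gamma$ attractor positions. The idea is: scan the tree top-down; whenever a block contains no attractor position, emit a copy directive pointing to a primary occurrence of that block and stop recursing into it; otherwise recurse into its two children. Leaves that still contain an attractor position at the bottom level are stored as explicit characters (there are only $\bigO(\gamma \log(n/\gamma))$ of them, one chain of length $\bigO(\log(n/\gamma))$ per attractor position once we start the tree at block length $\Theta(n/\gamma)$). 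This yields a macro scheme; the left-hand sides tile $[1,n]$ without overlap because blocks at a fixed level partition the text, so it is in fact a bidirectional parse.

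\textbf{Counting the phrases.} Start the recursion at level $0$ with $\gamma$ blocks of length $\lceil n/\gamma\rceil$ each (so at the top every block trivially may contain an attractor element). A block is \emph{marked} if it contains at least one of the $\gamma$ attractor positions; only marked blocks get subdivided. At each level the number of marked blocks is at most $\gamma$ (each attractor position marks one block per level), and each marked block spawns $2$ children, so each level contributes $\bigO(\gamma)$ new phrases (the unmarked children, which become copy phrases, plus the marked ones that keep recursing). After $\bigO(\log(n/\gamma))$ levels the blocks have length $1$ and we store the $\bigO(\gamma)$ surviving positions as explicit characters. Total size: $\bigO(\gamma \log(n/\gamma))$.

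\textbf{Correctness — why it's a valid (acyclic) parse.} The only real content beyond bookkeeping is showing the resulting macro scheme is unambiguous, i.e. has no circular dependency. For an unmarked block $B$ at level $\ell$, pick as its source a primary occurrence $B'$ crossing some attractor position $j_k$; the crucial point is that $B'$ then \emph{straddles} the boundary between two level-$\ell$ blocks, and the block containing $j_k$ is marked, hence was subdivided further. One shows that the ``height'' of a position (in the sense of the preliminaries) strictly decreases along copy chains because each redirection sends a block of length $L$ to a region covered by blocks of length $L$ one of which is marked and thus decomposed into strictly shorter pieces — chase this to conclude every chain reaches an explicitly stored character, so the parse reconstructs $T$. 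Equivalently, order phrases by (level, then position) and argue the source of each copy decomposes into phrases that are all strictly smaller in this order.

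\textbf{Main obstacle.} The delicate step is the acyclicity/decodability argument: redirecting an unmarked block to a primary occurrence is easy, but that primary occurrence may itself overlap unmarked blocks at the same level, and one must be sure the dependency graph is a DAG and that the decomposition of every source into already-defined phrases is consistent with the phrase boundaries. The clean way is to define the source of an unmarked level-$\ell$ block to be a window crossing an attractor position, then express that window as a concatenation of (parts of) level-$\ell$ phrases, at least one of which sits in a marked block and hence is ``closer to the leaves'' — giving a well-founded order. Getting the boundary alignment and the strict-decrease bookkeeping exactly right (including handling the two fractional blocks a source window may touch) is where the actual work lies; everything else is the routine geometric-series count above. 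The resulting height of the parse is $\bigO(\log(n/\gamma))$, which is exactly what the A-DAG application in the introduction needs.
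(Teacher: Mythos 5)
Your high-level plan (anchor phrases to attractor positions, redirect each phrase to a primary occurrence, count $\bigO(\gamma)$ phrases per level over $\bigO(\log(n/\gamma))$ levels) matches the spirit of the paper's proof, but the step you yourself flag as the main obstacle --- acyclicity --- genuinely fails for the uniform block decomposition you chose, and it is not just bookkeeping. With level-$\ell$ blocks of length $L_\ell$, the source $B'$ of an unmarked level-$\ell$ block $B$ is a window of length $L_\ell$ crossing some attractor position $j_k$; all its positions lie within distance $L_\ell$ of $j_k$, so $B'$ intersects at most the marked level-$\ell$ block containing $j_k$ and one neighbouring level-$\ell$ block --- and that neighbour may be \emph{unmarked}, hence itself a phrase of the very same length $L_\ell$. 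The source therefore does not decompose into phrases that are all strictly smaller in your (level, position) order: a position of $B$ can be copied into another level-$\ell$ phrase $C$, whose own source may spill back into $B$, and nothing in the construction rules out a circular chain $p \to p' \to p$. ``At least one piece of the source sits in a marked block'' is not enough, because validity of a macro scheme requires that \emph{every} position's copy chain reach an explicitly stored character; your claimed $\bigO(\log(n/\gamma))$ height also breaks for positions landing in same-length neighbours.

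The paper sidesteps this by not using uniform blocks. After padding $\Gamma$ with $\gamma$ equally spaced positions (Lemma~\ref{lemma: superset}), it stores each attractor position explicitly and lays down phrases of lengths $1,2,4,\dots$ radiating outward from each attractor element, plus one phrase filling the middle hole between consecutive elements. Since $1+1+2+\cdots+2^{e-1}=2^e$, every position within distance $2^e$ of an attractor element is covered by the explicit character or by a phrase of length at most $2^{e-1}$; hence the source of a length-$2^e$ phrase --- which crosses an attractor element --- only touches strictly shorter phrases, giving a well-founded order and the $\bigO(\log(n/\gamma))$ height essentially for free. To salvage your version you would have to make the phrases adjacent to attractor positions short and let them grow geometrically with distance from the attractor, which is exactly the paper's ``concentric exponential'' construction; the phrase count per attractor element remains $\bigO(\log(n/\gamma))$ either way.
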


In the next theorem we show how to derive a SLP from a string attractor.

\begin{theorem}\label{th: attractor -> SLP}
	Given a string $T\in\Sigma^n$ and a string attractor $\Gamma$ of size $\gamma$ for $T$, we can build a SLP for $T$ of size $\bigO(\gamma \log^2 (n/\gamma))$.
\end{theorem}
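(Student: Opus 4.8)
The plan is to give a direct construction based on a hierarchy of overlapping block decompositions of $T$, in the same spirit as the bidirectional parse used for Theorem~\ref{th: attractor -> macro} but keeping every reference grammar-compatible. Fix $L=\lceil\log_2(n/\gamma)\rceil$ and, for $\ell=0,1,\dots,L$, cut $T$ into consecutive \emph{level-$\ell$ blocks} of length $2^\ell$; assume for simplicity that $n/\gamma$ is a power of two, so that level-$0$ blocks are single characters and there are exactly $\gamma$ level-$L$ blocks of length $n/\gamma$ (the general case costs only $\bigO(\gamma)$ extra rules for partial blocks). Call a level-$\ell$ block \emph{active} if it lies within $\bigO(1)$ blocks of some attractor position, or, at level $L$, if it is one of the $\gamma$ topmost blocks; since every attractor position activates $\bigO(1)$ blocks per level, there are $\bigO(\gamma)$ active blocks at each of the $L+1$ levels. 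The SLP will have one non-terminal $X_B$ expanding to $T[B]$ for each active block $B$, built bottom-up from $\ell=0$ to $\ell=L$, plus a balanced binary tree of $\bigO(\gamma)$ rules over the $\gamma$ level-$L$ blocks whose root is the start symbol.

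For an active level-$0$ block we add $X_B\to T[B]$. For an active level-$\ell$ block $B$ with $\ell\ge 1$ there are two cases. If $T[B]$ contains an attractor position, we split $B$ into its two level-$(\ell-1)$ halves $B_1,B_2$ (both active, since they are adjacent to an attractor position) and add $X_B\to X_{B_1}X_{B_2}$. Otherwise, by Definition~\ref{def: string attractor}, $T[B]$ has an occurrence $T[a..a+2^\ell-1]$ straddling an attractor position; this occurrence overlaps only $\bigO(1)$ consecutive level-$(\ell-1)$ blocks, each of them active, so $T[B]$ factors as $s\cdot X_C\cdot p$ where $C$ ranges over the $\bigO(1)$ fully covered level-$(\ell-1)$ blocks, $s$ is a suffix of the expansion of an active level-$(\ell-1)$ block, and $p$ is a prefix of the expansion of another one. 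The heart of the argument is the following claim, proved by induction on $\ell$: \emph{every prefix, suffix, and infix of the expansion of an active level-$\ell$ block equals a concatenation of $\bigO(\ell)$ previously created non-terminals}. Granting the claim, $s$ and $p$ can each be turned into a single fresh non-terminal using a balanced binary tree of $\bigO(\ell)$ rules, and then $X_B$ is obtained by a constant-size binary tree over these $\bigO(1)$ symbols. The inductive step of the claim is routine: descending through a split rule, or through the factorization above of a copied block, lowers $\ell$ by one and contributes only $\bigO(1)$ symbols, so after $\bigO(\ell)$ steps the total is $\bigO(\ell)$. Since every rule introduced for a level-$\ell$ block refers exclusively to level-$(\ell-1)$ material, the grammar is acyclic, i.e. a genuine SLP, and by construction it derives $T$.

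For the size bound, there are $\bigO(\gamma)$ active blocks per level. Each ``split'' block costs one rule, for $\bigO(\gamma L)=\bigO(\gamma\log(n/\gamma))$ in total; each ``copied'' block costs $\bigO(\ell)=\bigO(L)$ rules, for $\bigO(\gamma L^2)=\bigO(\gamma\log^2(n/\gamma))$ in total; the top tree and the level-$0$ rules add $\bigO(\gamma)$ more. Hence the SLP has size $\bigO(\gamma\log^2(n/\gamma))$.

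I expect the main obstacle to be exactly the handling of blocks that do not contain an attractor position, and this is also where the squared logarithm comes from. Re-expressing such a block forces one to re-express a prefix or a suffix of a block one level down, and in this construction a prefix or suffix of a length-$2^\ell$ block genuinely costs $\Theta(\ell)$ non-terminals; it is by pushing each reference down to the immediately finer level (rather than chaining references within the same level) that one keeps the recursion of depth $\bigO(L)$ and avoids circular references, at the price of this extra logarithmic factor.
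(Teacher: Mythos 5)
Your overall architecture (aligned dyadic blocks, $\bigO(\gamma)$ active blocks per level, split-or-copy rules) is a genuinely different route from the paper's, which first builds the bidirectional parse of Theorem~\ref{th: attractor -> macro} and then maintains a \emph{single, globally consistent} leveled $2$--$3$ parse of $T$, processing phrases in order of increasing length and copying each source's already-built parse wholesale, with only $\bigO(1)$ fresh border nonterminals per level and hence $\bigO(\log(n/\gamma))$ per phrase. That global consistency is not cosmetic: it is precisely what your per-block construction must replace with the italicized claim, and the claim is where your proof breaks. Your inductive step assumes that descending one level costs $\bigO(1)$ symbols and never branches, but it does branch: an \emph{infix} of a block splits into a suffix subproblem and a prefix subproblem (one per end), and a prefix or suffix of a copied block $X_B\to s\cdot X_C\cdot p$ can fall entirely inside $s$ or $p$, which are themselves \emph{proper} suffixes/prefixes of level-$(\ell-1)$ blocks---so, e.g., a prefix of $s$ is an \emph{infix} of a lower-level block. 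Writing $P,S,I$ for the worst-case number of existing nonterminals needed for prefixes, suffixes and infixes of active level-$\ell$ blocks, your construction only yields $P(\ell),S(\ell)\le\max\{I(\ell-1),\,P(\ell-1)+\bigO(1)\}$ and $I(\ell)\le\max\{I(\ell-1),\,S(\ell-1)+P(\ell-1)+\bigO(1)\}$, which admit solutions growing like $2^{\Theta(\ell)}$, i.e.\ polynomially in $n/\gamma$. Nothing in your argument rules out this cascade, so the $\bigO(\gamma\log^2(n/\gamma))$ bound is not established.

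The construction can likely be rescued, but by a different argument: bound the \emph{height} of the derivation tree of every level-$\ell$ nonterminal by $\bigO(\ell)$ and invoke the standard fact that any substring of the expansion of a binary nonterminal of height $h$ is a concatenation of $\bigO(h)$ existing nonterminals (take the at most two maximal whole subtrees per level hanging between the two cut paths). This kills the branching because both cuts are charged once against the height rather than once per recursive re-expansion. Even then, your choice of a \emph{balanced} binary tree over the $\bigO(\ell)$ pieces of $s$ and $p$ adds $\bigO(\log\ell)$ to the height at every level, giving total height $\bigO(\ell\log\ell)$ and a final size of $\bigO(\gamma\log^2(n/\gamma)\log\log(n/\gamma))$; to recover the stated bound you should instead chain the pieces of $s$ (resp.\ $p$) along the cut path of the source block's derivation tree, which keeps the height increase per level to $\bigO(1)$. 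As written, however, the key claim is unproved and its ``routine'' induction is false.
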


Using the above theorems, we can derive the approximation rates of some compressors for repetitive strings with respect to the smallest string attractor.

\begin{corollary}\label{cor: approximation rates}
	The following bounds hold between the size $g^*$ of the smallest SLP, the size $g_{rl}^*$ of the smallest run-length SLP, the size $z$ of the Lempel-Ziv parse without self-references, 
	the size $b^*$ of the smallest macro scheme, and the size $\gamma^*$ of the smallest string attractor:
	\begin{enumerate}
		\item $b^* \in \bigO(\gamma^* \log (n/\gamma^*))$
		\item $g^*, g^*_{rl}, z \in \bigO(\gamma^* \log^2 (n/\gamma^*))$	
	\end{enumerate}
\end{corollary}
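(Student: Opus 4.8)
The plan is to read off both bounds almost directly from Theorems~\ref{th: attractor -> macro} and~\ref{th: attractor -> SLP}, instantiated on a \emph{smallest} string attractor $\Gamma^*$ of $T$ (of size $\gamma^*$), together with a few already-known inclusions between the measures; no new machinery is needed, and essentially all the care required is to use each inequality in the correct direction.

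For item (1) I would apply Theorem~\ref{th: attractor -> macro} to $\Gamma^*$, obtaining a bidirectional parse of $T$ of size $\bigO(\gamma^*\log(n/\gamma^*))$. Since, as recalled in Section~\ref{sec:preliminaries}, every bidirectional parse is a particular macro scheme, the smallest macro scheme cannot be larger, so $b^*\le\bigO(\gamma^*\log(n/\gamma^*))$.

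For item (2) I would apply Theorem~\ref{th: attractor -> SLP} to $\Gamma^*$, obtaining an SLP for $T$ of size $\bigO(\gamma^*\log^2(n/\gamma^*))$, whence $g^*\le\bigO(\gamma^*\log^2(n/\gamma^*))$. Every SLP is in particular a run-length SLP (it simply uses no rule of the form $X\to A^\ell$), so $g_{rl}^*\le g^*$ and the same bound carries over to $g_{rl}^*$. For $z$, I would invoke the classical fact (see, e.g.,~\cite{rytter2003application}) that from any grammar of size $g$ generating $T$ one can extract a Lempel-Ziv parse without self-references with $\bigO(g)$ phrases --- scanning $T$ left to right, each new LZ phrase is charged to a distinct nonterminal whose expansion is first completed within that phrase --- and apply it with $g=g^*$, giving $z=\bigO(g^*)=\bigO(\gamma^*\log^2(n/\gamma^*))$.

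\textbf{Where the difficulty lies.} There is essentially no obstacle beyond bookkeeping: the real work is already done in Theorems~\ref{th: attractor -> macro} and~\ref{th: attractor -> SLP}. The one point that needs a moment's attention is the $z$ bound: LZ77 is a \emph{special case} of a bidirectional parse (indeed of a macro scheme), so the macro-scheme route of item~(1) only yields $b^*\le z$ and cannot bound $z$ from above; instead $z$ must be routed through the grammar-to-LZ77 inequality $z=\bigO(g)$ applied to the SLP of Theorem~\ref{th: attractor -> SLP}, and one should note that this extraction does not require self-references --- consistent with Theorem~\ref{th: attractor -> SLP} producing an ordinary binary SLP.
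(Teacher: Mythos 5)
Your proposal is correct and follows essentially the same route as the paper: item (1) is obtained by instantiating Theorem~\ref{th: attractor -> macro} on a minimum-size attractor, and item (2) by instantiating Theorem~\ref{th: attractor -> SLP} on the same attractor and then invoking the classical relation $z\leq g^*$ of Rytter~\cite{rytter2003application} (the paper leaves the trivial $g^*_{rl}\leq g^*$ step implicit, which you spell out). Your remark that $z$ cannot be bounded via the macro-scheme route and must go through the grammar bound is exactly the right point of care, and matches the paper's argument.
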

\begin{proof}
	For the first bound, build the bidirectional parse of Theorem \ref{th: attractor -> macro} using a string attractor of minimum size $\gamma^*$. For the second bound, use the same attractor to build the SLP of Theorem \ref{th: attractor -> SLP} and exploit the well-known relation $z\leq g^*$~\cite{rytter2003application}. The result follows from the definitions of smallest macro scheme and smallest SLP. \qed 
\end{proof}

Combining the reductions of Section \ref{sec:compressors->SA} with the results stated in Corollary \ref{cor: approximation rates}, we obtain approximation algorithms for the smallest string attractor. Note that only one of our approximations, however, is computable in polynomial time (unless P=NP): the attractor induced by the Lempel-Ziv factorization. This algorithm computes a $\bigO(\log^2 (n/\gamma^*))$-approximation. In Section \ref{sec:graphs} we show how to compute a $(\ln n)$-approximation of the smallest string attractor in $\bigO(n^2)$ time.

Combining the construction of Theorem \ref{th: attractor -> SLP} with  the reductions of Theorems \ref{th: attractor g} and \ref{th: attractor MS} we obtain the following new relations between repetitiveness measures: 

\begin{corollary}\label{cor: bounds1}
	The following bounds hold between the size $g^*$ of the smallest SLP, the size $b^*$ of the smallest macro scheme, and the number $r$ of equal-letter runs in the BWT:
	\begin{enumerate}
		\item $g^* \in \bigO(b^*\log^2(n/b^*))$
		\item $g^* \in \bigO(r\log^2(n/r))$ 
	\end{enumerate}
\end{corollary}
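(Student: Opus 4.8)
The plan is to combine the two reductions already developed: from compressor to string attractor (Section~\ref{sec:compressors->SA}) and from string attractor to SLP (Theorem~\ref{th: attractor -> SLP}). For the first bound, start with the smallest macro scheme, of size $b^*$. By Theorem~\ref{th: attractor MS}, it induces a string attractor $\Gamma_{MS}$ of size at most $2b^*$. Hence the smallest string attractor satisfies $\gamma^* \leq 2b^*$. Feeding $\Gamma_{MS}$ (or, equivalently, any string attractor of size $\bigO(b^*)$) into the construction of Theorem~\ref{th: attractor -> SLP} yields an SLP of size $\bigO(b^* \log^2(n/b^*))$ for $T$. Since $g^*$ is the size of the \emph{smallest} SLP, we get $g^* \in \bigO(b^*\log^2(n/b^*))$.

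For the second bound the argument is identical, only the starting compressor changes: take the Burrows--Wheeler transform of $T$ with $r$ equal-letter runs. By Theorem~\ref{th: attractor RLBWT}, it induces a string attractor $\Gamma_{RLBWT}$ of size at most $2r$, so $\gamma^* \leq 2r$. Applying Theorem~\ref{th: attractor -> SLP} to this attractor produces an SLP of size $\bigO(r \log^2(n/r))$, and again minimality of $g^*$ gives $g^* \in \bigO(r\log^2(n/r))$.

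The only subtlety to address is monotonicity of the bound $\gamma \mapsto \gamma \log^2(n/\gamma)$: passing from $\gamma^*$ (the quantity that literally appears in Theorem~\ref{th: attractor -> SLP} via Corollary~\ref{cor: approximation rates}) to $b^*$ or to $r$ requires that $\bigO(\gamma^*\log^2(n/\gamma^*)) \subseteq \bigO(b^*\log^2(n/b^*))$ whenever $\gamma^* \leq 2b^*$ (and similarly for $r$). This holds because $x\log^2(n/x)$ is increasing in $x$ on the relevant range $x \in [1,n]$ (its derivative is $\log^2(n/x) - 2\log(n/x) = \log(n/x)(\log(n/x)-2)$, which is nonnegative once $n/x \geq e^2$; for $n/x < e^2$ the quantity $x\log^2(n/x)$ is anyway $\bigO(x) \subseteq \bigO(b^*)$). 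One can also sidestep this entirely by simply running the construction of Theorem~\ref{th: attractor -> SLP} directly on the attractor of size $\leq 2b^*$ (resp. $\leq 2r$) produced by Theorem~\ref{th: attractor MS} (resp. Theorem~\ref{th: attractor RLBWT}), rather than on the minimum attractor — then the $\bigO(b^*\log^2(n/b^*))$ bound comes out by definition, with no monotonicity appeal needed.

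I do not anticipate a real obstacle here; the corollary is essentially a bookkeeping composition of theorems already proved. The one place demanding a line of care is the constant-factor slack ($2b^*$ versus $b^*$, $2r$ versus $r$) inside the logarithm, which is absorbed by $\bigO$-notation as just discussed. It is also worth remarking, for context, that these are the first bounds relating $g^*$ (equivalently, via $z\leq g^*$~\cite{rytter2003application}, also $z$) to $b^*$ and to $r$, resolving the open question from~\cite{prezza2016can}; the reductions in the reverse direction (e.g. whether $b^* = o(\gamma^*)$ or $r = \bigO(g^*\,\mathrm{polylog}\,n)$ is possible) are not claimed and remain open.
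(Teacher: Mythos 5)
Your proposal is correct and follows essentially the same route as the paper: feed the attractor induced by the smallest macro scheme (Theorem~\ref{th: attractor MS}) or by the RLBWT (Theorem~\ref{th: attractor RLBWT}) into the SLP construction of Theorem~\ref{th: attractor -> SLP}, then invoke minimality of $g^*$. Your extra care about the factor~$2$ and the monotonicity of $x\log^2(n/x)$ is a welcome (if minor) tightening of a detail the paper glosses over.
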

\begin{proof}
	For bound 1, build the SLP of Theorem \ref{th: attractor -> SLP} on a string attractor of size $b^*$ induced from the smallest macro scheme (Theorem \ref{th: attractor MS}). Do the same, but using the attractor induced from the RLBWT (Theorem \ref{th: attractor RLBWT}), for bound 2. The results follows from the definition of smallest SLP. \qed 
\end{proof}

Note that, using the well-known relation $z \leq g^*$ for the size $z$ of the Lempel-Ziv parse without self-references~\cite{rytter2003application}, we also obtain:

\begin{corollary}\label{cor: bounds2}
	The following bounds hold between the size $z$ of the Lempel-Ziv parse without self-references, the size $b^*$ of the smallest macro scheme, and the number $r$ of equal-letter runs in the BWT:
	\begin{enumerate}
		\item $z \in \bigO(b^*\log^2(n/b^*))$
		\item $z \in \bigO(r\log^2(n/r))$
	\end{enumerate}
\end{corollary}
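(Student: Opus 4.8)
The plan is to obtain both bounds as immediate consequences of Corollary~\ref{cor: bounds1} together with Rytter's classical inequality $z \leq g^*$~\cite{rytter2003application}, which states that the Lempel-Ziv parse without self-references of a string is never larger than its smallest straight-line program: intuitively, any SLP of size $g$ can be greedily unfolded into a left-to-right (unidirectional) parse of size at most $g$, and LZ77 is the optimal such parse. Given this inequality, bound 1 follows by chaining $z \leq g^* \in \bigO(b^*\log^2(n/b^*))$, and bound 2 follows analogously by chaining $z \leq g^* \in \bigO(r\log^2(n/r))$, where in both cases the second containment is exactly Corollary~\ref{cor: bounds1}.

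Alternatively, one can bypass the detour through $g^*$ and argue directly from the attractor reductions. Theorem~\ref{th: attractor MS} converts the smallest macro scheme into a string attractor of size $\gamma \leq 2b^*$; feeding this attractor into the construction of Theorem~\ref{th: attractor -> SLP} yields an SLP of size $\bigO(\gamma\log^2(n/\gamma))$, whose size upper-bounds $g^* \geq z$. Using the monotonicity (up to constants) of $x \mapsto x\log^2(n/x)$ for $x$ small relative to $n$ --- or, more cleanly, invoking Corollary~\ref{cor: approximation rates}(2) with the smallest attractor, whose size is at most $2b^*$ --- one recovers $z \in \bigO(b^*\log^2(n/b^*))$. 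Replacing Theorem~\ref{th: attractor MS} by Theorem~\ref{th: attractor RLBWT} (attractor of size at most $2r$) gives bound 2 in the same way.

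There is no real obstacle here: the corollary is a bookkeeping step that merely propagates the already-proved SLP bounds down to $z$ via $z \leq g^*$. The only mild subtlety is the monotonicity remark in the alternative derivation --- one must check that substituting the upper estimate $2b^*$ (resp.\ $2r$) for $\gamma^*$ into the expression $\gamma^*\log^2(n/\gamma^*)$ does not decrease it asymptotically, which is automatic in the regime $\gamma^* = o(n)$ of interest and, outside that regime, both sides are simply $\Theta(n)$ while $z = \bigO(n)$ holds trivially.
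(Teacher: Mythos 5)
Your primary derivation is exactly the paper's: Corollary~\ref{cor: bounds2} is obtained there by chaining the inequality $z \leq g^*$ from~\cite{rytter2003application} with the two bounds of Corollary~\ref{cor: bounds1}, so your proof is correct and follows the same route. Your alternative argument merely unwinds the proof of Corollary~\ref{cor: bounds1} (Theorems~\ref{th: attractor MS} and~\ref{th: attractor RLBWT} feeding into Theorem~\ref{th: attractor -> SLP}), and your monotonicity remark is a reasonable, if unstated in the paper, bit of due diligence.
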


The first relation shows that the size of the smallest unidirectional parse without self-references---which can be computed in linear time---is at most a polylogarithmic factor larger than the smallest macro scheme---which is NP-hard to find. The second bound is the first to relate the size of LZ77 with that of the RLBWT \footnote{We are aware that the same bounds have been derived simultaneously by Gagie et al.~\cite{gagie2017optimal} (work not yet published) using very different techniques based on locally-consistent parsing.}, and solves an open problem explicitly stated in~\cite{prezza2016can}.

\section{Applications: a universal compressed data structure for text extraction}

The problem of efficiently extracting substrings from compressed text representations has lately received a lot of attention in the field of compressed computation. 
On one side, the following lower bound is known to hold for grammar-compressed text~\cite{CVY13}: $\Omega((\log n)^{1-\epsilon}/\log g)$ time is needed to access one random position within 
$O(\mathrm{poly}(g))$ space, for
every constant $\epsilon>0$.
As long as upper-bounds are concerned, several data structures have been proposed in the literature for each distinct compression scheme. In Table \ref{table:extract} we report the best time-space trade-offs known to date, grouped by compression scheme (horizontal lines).
Extracting text from Lempel-Ziv compressed text is a notoriously difficult problem.
No efficient solution exists within $\bigO(z)$ space  (they all require time proportional to the parse's height), although efficient queries can be supported by raising the space by a logarithmic factor~\cite{PhBiCPM17,BGGKOPT15}. Grammars, on the other hand, allow for more compact and time-efficient extraction strategies. 
Bille et al.~\cite{BLRSRW15} have been the first to show how to efficiently perform text extraction within $\bigO(g)$ space. Their time bounds were later improved by Belazzogui et al.\cite{BPT15}, who also showed how to slightly increase the space to $O(g\log^\epsilon n\log(n/g))$   while almost matching the lower bound~\cite{CVY13} in the packed setting. Space-efficient text extraction from the run-length Burrows-Wheeler transform has been an open problem until recently. Standard solutions~\cite{MNSV09} required to spend additional $\bigO(n/s)$ space on top of the RLBWT in order to support extraction in a time proportional to $s$. In a recent publication, Gagie et al.~\cite{gagie2017optimal} showed how to achieve near-optimal extraction time in the packed setting within $O(r\log(n/r))$ space. Belazzogui and Cunial~\cite{BCspire17} showed how to efficiently extract text from a CDAWG-compressed text. Their most recent work~\cite{belazzougui2017representing} shows moreover how to build a grammar of size $\bigO(e)$: this result implies that the same bounds for grammar-compressed text apply to the CDAWG. To conclude, no strategy for efficiently extracting text from general macro schemes is known to date: the only solution we are aware of requires to explicitly navigate the macro scheme's directives, and requires therefore time proportional to the parse's height.

\begin{table}
	\begin{center}
		\begin{tabular}{l|c|c}
			Structure & Space & Extract time \\
			\hline
			Bille et al.~\cite[Lem.~5]{PhBiCPM17}
			& $O(z\log(n/z))$ 
			& $O(\ell+\log(n/z))$ \\
			Gagie et al.~\cite[Thm.~2]{BGGKOPT15}   
			& $O(z\log(n/z))$ 
			& $O((1+\ell/\log_\sigma n)\log(n/z))$ \\
			\hline
			Belazzougui et al.~\cite[Thm.~1]{BPT15} 
			& $O(g)$ 
			& $O(\ell/\log_\sigma n+\log n)$ \\
			Belazzougui et al.~\cite[Thm.~2]{BPT15} 
			& $O(g\log^\epsilon n\log(n/g))$ 
			& $O(\ell/\log_\sigma n + \log n/\log\log n)$ \\
			\hline
			Gagie et al.~\cite[Thm.~2]{gagie2017optimal}
			& $O(r\log(n/r))$
			& $O( \ell\log(\sigma)/w + \log(n/r))$ \\
			\hline
			Belazzougui and Cunial \cite[Thm.~1]{BCspire17}~~
			& $O(e)$ 
			& $O(\ell + \log n)$ \vspace{10pt}\\
		\end{tabular}\caption{Best trade-offs in the literature for extracting  text from compressed representations.}\label{table:extract}
	\end{center}
\end{table}
\vspace{-30pt}
\begin{table}
	\begin{center}
		\begin{tabular}{l|c}
			Space & Extract time \\
			\hline
			$\bigO(\gamma^{1-\epsilon}n^\epsilon)$ & $\bigO(\ell\log(\sigma)/w)$\\
			$\bigO\left(\gamma \log^{\epsilon} n\log(n/\gamma)\right)$ & 
			$\bigO\left(\ell\log(\sigma)/w + \frac{\log(n/\gamma)}{\log\log n}\right)$ \\
			$\bigO\left(\gamma \log^{1+\epsilon}(n/\gamma)\right)$ & 
			$\bigO\left(\ell\log(\sigma)/w + \frac{\log(n/\gamma)}{\log\log (n/\gamma)}\right)$ \\
			$\bigO\left(\gamma \log(n/\gamma)\right)$ & $\bigO(\ell\log(\sigma)/w + \log(n/\gamma))$ \vspace{10pt}
		\end{tabular}\caption{Some interesting trade-offs achievable with the A-DAG of Theorem \ref{thm:extract}, in order of decreasing space and increasing time.}\label{table:extract A-DAG} 
	\end{center}
\end{table}
\vspace{-20pt}

In this section, we show that the powerful abstraction offered by string attractors can be used to efficiently extract text from \emph{any} dictionary-compressed representation. We describe a parametrized data structure---the \emph{Attractor Directed Acyclic Graph}, or A-DAG for short---offering a range of different space-time trade-offs. Our main result is stated in Theorem \ref{thm:extract}. Table \ref{table:extract A-DAG} reports some interesting space-time trade-offs achievable with our data structure. 
To compare the bounds of Tables \ref{table:extract} and \ref{table:extract A-DAG}, just replace $\gamma$ with any of the measures $z,g,r,e,b$ (possible by the reductions described in Section \ref{sec:compressors->SA}).
On one extreme of the parameter space, we achieve optimal extraction time \emph{in the packed setting}. Interestingly, this is achieved within a space---$\bigO(\gamma^{1-\epsilon}n^\epsilon)$ words---that is always $\bigO(n)$ (since $\gamma \leq n$) and is $o(n)$ if the string is asymptotically compressible (i.e. if $\gamma = o(n)$). By the lower bound~\cite{CVY13}, $\Omega((\log n)^{1-\epsilon}/\log g)$ time is needed to access one random position within 
$O(\mathrm{poly}(g))$ space on grammar-compressed strings, for
every constant $\epsilon>0$. Since $g\in \Omega(\log n)$, the space in lines 2 and 3 of Table \ref{table:extract A-DAG} is $O(\mathrm{poly}(g))$ on grammar-induced attractors, and extraction time therefore almost matches the optimum. Moreover, since string attractors are more general than grammars, these trade-offs are close to the optimum also for general string attractors. Note also that the time in line 2 of Table \ref{table:extract A-DAG} is strictly better than that of Line 4 in Table \ref{table:extract} (while the two solutions use the same space) and is optimal in the packed setting for polynomial compression rates, i.e. $n/\gamma \in \mathrm{polylog}(n)$. 
The A-DAG is, in general, faster than all existing extraction strategies for grammars, LZ77, Burrows-Wheeler transform, and CDAWG described in Table \ref{table:extract}. Our solution is, moreover, the first supporting fast extraction on general macro schemes. The A-DAG is a generalization of a data structure proposed very recently by Gagie et al.~\cite{gagie2017optimal} for extracting text compressed with the RLBWT. 

\begin{theorem}\label{thm:extract}
	Let $T[1..n]$ be a string over alphabet $[1..\sigma]$, and let $\Gamma$ be a string attractor of size $\gamma$ for $T$. For any integer parameter $\tau\geq 2$, we can store a data structure---the A-DAG of $\langle T, \Gamma \rangle $---of $\bigO(\gamma \tau \log_\tau(n/\gamma))$ words supporting the extraction of any length-$\ell$ substring of $T$ in $\bigO(\log_\tau(n/\gamma) + \ell\log(\sigma)/w)$ time.
\end{theorem}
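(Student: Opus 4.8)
The plan is to build a hierarchical block decomposition of $T$ of height $D=\bigO(\log_\tau(n/\gamma))$, in the spirit of block trees, in which the positions of the string attractor $\Gamma$ play the role of the ``special'' positions that dictate which blocks must be expanded and which may be represented by a constant-space back-pointer. Partition $T$ into $\gamma$ consecutive blocks of length $s_0=n/\gamma$; call these the level-$0$ blocks and declare all of them \emph{marked}. Recursively, split every marked level-$d$ block into $\tau$ level-$(d+1)$ sub-blocks of length $s_{d+1}=s_d/\tau$, and declare a level-$(d+1)$ sub-block \emph{marked} iff it contains a position of $\Gamma$ or is the immediate left or right neighbour in $T$ of a block that does; unmarked blocks are not split further. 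We stop at the level $D$ at which the block length drops to $\Theta(\max\{1,\,w/\log\sigma\})$, and store the content of each marked level-$D$ block \emph{packed} in $\bigO(1)$ machine words.

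The core structural fact --- and the only place the attractor property is used --- is the following. Let $B=T[a..a+s_d-1]$ be an unmarked level-$d$ block. Its content is a substring of $T$, so by Definition \ref{def: string attractor} it has an occurrence $T[a'..a'+s_d-1]$ containing some $j_k\in\Gamma$. A window of length $s_d$ overlaps at most two consecutive level-$d$ blocks; the one containing $j_k$ is marked, and the other (if any) is adjacent to it and hence marked too; moreover both of these level-$d$ blocks actually exist in the decomposition, since their level-$(d-1)$ parents are the level-$(d-1)$ block containing $j_k$ or its neighbour, which are themselves marked. We therefore store in $B$ a constant-space back-pointer identifying these (at most) two marked level-$d$ blocks together with the offset of the copy; this sharing is what turns the tree into the directed acyclic graph we call the A-DAG. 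For the space bound, note that at every level there are at most $3\gamma$ marked blocks --- three per attractor position --- hence at most $3\gamma\tau$ blocks in total at the next level; charging $\bigO(1)$ words per block (child links, a back-pointer, or packed content) and summing over the $\bigO(\log_\tau(n/\gamma))$ levels gives $\bigO(\gamma\tau\log_\tau(n/\gamma))$ words. (For $\tau=2$ this is essentially the decomposition underlying the bidirectional parse of Theorem \ref{th: attractor -> macro}.)

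To extract $T[i..j]$ with $\ell=j-i+1$, I would first locate the level-$D$ block containing position $i$ by descending from level $0$: at a marked block we move to the appropriate child (an $\bigO(1)$ index computation), and at an unmarked block we follow its back-pointer, which re-expresses the current position as an equal position inside one of its two marked level-$d$ targets, from which we then descend; since each back-pointer step is immediately followed by a descent, this finishes after $\bigO(D)=\bigO(\log_\tau(n/\gamma))$ constant-time steps. We then read $\ell$ characters forward by a left-to-right traversal of the decomposition: blocks entirely inside $[i,j]$ are expanded into their children (or, if unmarked, into the content of their $\le 2$ same-level marked targets), while the $\bigO(1)$ boundary blocks straddling $i$ or $j$ at each level are handled on a partial range. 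The number of marked level-$D$ blocks whose content is actually output is $\bigO(1+\ell\log\sigma/w)$, each read and shifted into the packed output in $\bigO(1)$ time (the offset may be unaligned because of the back-pointer chain); the per-level node counts above this level form a geometric series dominated by this term, and the boundary work contributes $\bigO(D)$; writing the answer packed, the total is $\bigO(\log_\tau(n/\gamma)+\ell\log\sigma/w)$, as claimed.

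The main obstacle is precisely the time analysis of the forward extraction: one must argue carefully that following back-pointers during the read does not let the number of visited nodes per level grow unboundedly --- in particular, that a partial (boundary) sub-problem which passes through an unmarked block, and is thereby replaced by two same-level sub-problems inside marked blocks, still produces only $\bigO(1)$ boundary sub-problems per level rather than multiplying --- so that neither an extra $\tau$ nor an extra logarithmic factor creeps into the running time; by contrast, the structural claim about unmarked blocks and the space bound are comparatively routine. (The construction generalizes the RLBWT text-extraction structure of Gagie et al.~\cite{gagie2017optimal}, recovered by taking $\Gamma$ to be the size-$2r$ attractor of Theorem \ref{th: attractor RLBWT}.)
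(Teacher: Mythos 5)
Your decomposition and space accounting are fine and close to the paper's, but the heart of the theorem is the extraction time bound, and there you have a genuine gap --- one you yourself flag as "the main obstacle" without resolving it. With leaf blocks of length $\Theta(\max\{1,w/\log\sigma\})$ (one machine word of packed text each), extracting $\ell$ characters forces you to reach $\Omega(\ell\log\sigma/w)$ leaves; doing so by independent top-down descents costs $\bigO(\ell\log\sigma/w\cdot\log_\tau(n/\gamma))$ (a multiplicative log, which is exactly the weaker bound of the block-tree line in Table~\ref{table:extract}), and doing so by a single left-to-right traversal requires the amortization argument you defer. That argument does not go through as stated: an interior unmarked block is rewritten as partial ranges inside its (up to two) same-level marked targets, each partial range creates fresh boundary sub-problems with their own $\bigO(D)$ descent cost, and nothing in your construction prevents this from accumulating beyond $\bigO(D+\ell\log\sigma/w)$.

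The paper avoids the traversal entirely with two devices you are missing. First, it stops the recursion early, at the level $i^*$ where the block length drops below $4\alpha$ with $\alpha=w\log_\tau(n/\gamma)/\log\sigma$, and stores those blocks explicitly packed; each leaf thus holds $\Theta(\log_\tau(n/\gamma))$ words, which is absorbed by the $\bigO(\gamma\tau\log_\tau(n/\gamma))$ space bound. A query is then answered as $\lceil\ell/\alpha\rceil$ \emph{independent} descents, each retrieving $\alpha$ packed characters in $\bigO(\log_\tau(n/\gamma))$ time, giving the additive bound with no amortization. Second, to guarantee that a chunk of length $\alpha$ can be mapped down level by level as a single contiguous unit (never split after following a back-pointer, whose target sits at an arbitrary offset), the paper makes the blocks around each attractor position \emph{overlap}: it keeps $4\tau$ half-blocks of length $s_i/2$ plus $4\tau-1$ additional half-blocks shifted by $s_i/2$, so that every substring of length at most $s_i/4$ inside the covered region lies entirely within some half-block. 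Your non-overlapping children do not have this property, so a chunk can straddle a boundary at any level and fragment during the descent. If you incorporate these two modifications --- fat packed leaves of $\Theta(\alpha)$ characters and overlapping half-blocks carrying the back-pointers --- your construction becomes essentially the paper's and the claimed bounds follow; as written, it does not establish the stated extraction time.
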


\section{Generalizations to labeled graphs}\label{sec:graphs}

In this section we propose a generalization of string attractors to labeled directed graphs capturing all distinct paths (interpreted as strings) in the graph.

In the following, $G=\langle V, E, \delta \rangle$ denotes an edge-labeled directed graph with nodes from $V$, edges from $E \subseteq V\times V$, and labeling function $\delta:E \rightarrow \Sigma$. 

\begin{definition}\label{def:path att}
	A \emph{path attractor} of $G$ is a set of edges $\Gamma \subseteq E$ such that for every path $v_1\dots v_k$ of $G$ there exists a path $v'_1\dots v'_k$ such that $\delta(\langle v_i,v_{i+1}\rangle) = \delta(\langle v'_i,v'_{i+1}\rangle)$ for $i=1,\dots, k-1$ and $\langle v'_j,v'_{j+1}\rangle \in\Gamma$ for some $1\leq j< k$.	
\end{definition}

Note that Definition \ref{def:path att} is equivalent to Definition \ref{def: string attractor} when $G$ is a path graph.
We now study the complexity of determining whether a graph has a path attractor of some fixed size $k$.

\begin{definition} \emph{Path attractor problem} 
	\begin{itemize}
		\item{Input:}\ \ \ \ \ $\langle G, k\rangle$, where $G$ is a labeled graph and $k\geq 1$ is an integer
		\item{Question:} Does $G$ have a path attractor of size at most $k$?
	\end{itemize}
\end{definition}

\begin{lemma}\label{lem:NP-hardness}
	The path attractor problem is NP-hard
\end{lemma}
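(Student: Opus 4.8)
The plan is to prove NP-hardness of the path attractor problem by reduction from a known NP-complete covering problem. The most natural candidate is \textbf{Vertex Cover}: given a graph $H=(V_H,E_H)$ and an integer $k$, decide whether $H$ has a vertex cover of size at most $k$. The intuition is that a string attractor is itself a covering object (every distinct substring must be ``hit'' by an attractor position), so the reduction should make the ``distinct substrings that need covering'' correspond exactly to the edges of $H$, and the ``available attractor positions'' correspond to the vertices of $H$.

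First I would design the gadget graph $G$. For each vertex $u \in V_H$ create a short labeled edge (or a tiny path) $e_u$ carrying a unique ``vertex label'' $a_u$; these are the only edges that will ever be worth putting into the attractor. For each edge $\{u,v\}\in E_H$ create a dedicated path in $G$ whose label spells a string $w_{uv}$ that contains the symbols $a_u$ and $a_v$, arranged so that the only way for $w_{uv}$ (or some critical substring of it) to satisfy Definition \ref{def:path att} is for the attractor to include $e_u$ or $e_v$. One clean way to do this: make each edge-path of $H$ the unique place in $G$ where a particular substring $s_{uv}$ occurs, and attach $s_{uv}$ directly to the two vertex-edges $e_u,e_v$ so that the only occurrences of $s_{uv}$ that cross an available attractor edge are the ones passing through $e_u$ or through $e_v$. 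Then $G$ has a path attractor of size $\le k$ if and only if $H$ has a vertex cover of size $\le k$ (after accounting for a fixed additive number of ``forced'' attractor edges used to cover the trivial single-symbol paths and the connective structure, which I would fold into a constant shift of the parameter $k$).

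The argument then splits into the two standard directions. For the forward direction, given a vertex cover $C$ of $H$, I would take $\Gamma$ to be $\{e_u : u\in C\}$ together with the fixed set of forced edges, and verify that every path of $G$ has an equivalent (same-label) path crossing some edge of $\Gamma$: paths living entirely inside a vertex gadget or the connective scaffolding are handled by the forced edges, and any path touching an edge-gadget for $\{u,v\}$ can be rerouted/matched through $e_u$ or $e_v$ since one of them lies in $C$. For the converse, given a path attractor $\Gamma$ of size $\le k+$const, I would argue that we may assume without loss of generality that $\Gamma$ consists only of vertex-edges $e_u$ plus the forced edges (any other edge can be swapped for a cheaper vertex-edge without destroying the covering property — this ``normalization'' step is where one has to be careful), and then show that $\{u : e_u\in\Gamma\}$ must be a vertex cover: if some $\{u,v\}\in E_H$ had neither endpoint selected, the designated substring $s_{uv}$ would have no label-equivalent occurrence crossing any attractor edge, contradicting Definition \ref{def:path att}.

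The main obstacle I anticipate is the \emph{normalization / without-loss-of-generality} step in the converse direction: I need the gadget to be rigid enough that an optimal (or near-optimal) path attractor is essentially forced to be of the ``vertex-edges plus forced-edges'' form, so that no clever attractor can cover the $s_{uv}$-obligations more cheaply than a vertex cover would. Getting this rigidity typically requires making all the ``auxiliary'' substrings appear exactly once in $G$ (so their only covering option is to be crossed directly, which costs one attractor edge each and can be charged to the additive constant) while making the $s_{uv}$ substrings appear in precisely the handful of controlled places described above. A secondary, more routine concern is bookkeeping the exact additive constant by which the parameter shifts, and confirming $G$ and $\Gamma$ have size polynomial in $|H|$ so the reduction runs in polynomial time; since membership in NP is immediate (a path attractor of size $\le k$ is a polynomial-size certificate checkable in polynomial time by testing each path-obligation), NP-hardness upgrades to NP-completeness, and because the construction can be carried out with $G$ a tree the hardness even holds on trees as claimed later in the paper.
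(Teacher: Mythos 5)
Your proposal reduces from Vertex Cover with a vertex-gadget/edge-gadget construction, whereas the paper reduces from Set Cover by turning each set $S_i$ into a pair of identical binary tries (containing the $m$-bit representations of the elements of $S_i$) hanging under a root, so that covering all binary path-labels of length $m{+}1$ corresponds exactly to covering the universe. A different source problem is legitimate in principle, but as written your argument has a genuine gap, and it is precisely the one you flag as "the main obstacle": the rigidity/normalization step is not just delicate, it breaks under the gadget you describe. If each vertex label $a_u$ is unique and appears only on the edge $e_u$, then the length-one path labeled $a_u$ is itself a distinct path that must be covered, and its only occurrence is $e_u$; hence \emph{every} path attractor is forced to contain \emph{every} $e_u$, the extracted "vertex cover" is all of $V_H$, and the equivalence with Vertex Cover collapses. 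To repair this you must make each $a_u$ occur in several places (e.g.\ also inside the edge-gadgets), but then the occurrences of $s_{uv}$ and of all its substrings proliferate, and the claim that "the only occurrences of $s_{uv}$ crossing an attractor edge pass through $e_u$ or $e_v$" has to be re-established from scratch for the new gadget. None of this is carried out, so the if-and-only-if is not actually proved in either direction.

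A second, smaller inaccuracy: the additive shift absorbed by the "forced" edges is not a constant. At minimum one attractor edge per distinct alphabet symbol is required (each single-symbol path must be hit), so the shift grows with the instance. That is perfectly acceptable for a polynomial-time reduction -- the paper itself uses an instance-dependent shift of $t$, one forced edge per set $S_i$, and a factor of $2$ to handle prefix-closure of the binary labels -- but it must be computed exactly and folded into the target parameter $k$, not dismissed as a constant. If you want to complete a proof along your lines, the paper's construction is worth studying for how it resolves exactly these two issues: the forced edges are isolated on dedicated symbols $s_1,\dots,s_t$ whose count is known in advance, and the duplicated subtrees under $c_i^0$ and $c_i^1$ guarantee that every residual obligation is a binary string that can \emph{only} be covered by selecting (both copies of) a set, which is what makes the converse direction go through without any normalization argument.
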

\begin{proof}
	The idea is to show a reduction from set-cover to the path attractor problem. The reduction is obtained by converting the $t$ sets of the set-cover instance to $t$ tries built using the binary representations of the elements in the sets, and observing that a path attractor on these trees corresponds to a cover of the universe in the set-cover instance (and vice-versa). See Appendix \ref{sec:app} for  full formal proof.
\end{proof}

\begin{lemma}\label{lem:NP}
	The path attractor problem belongs to NP when $G$ is a tree
\end{lemma}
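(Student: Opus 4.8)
The plan is to exhibit a polynomial-size certificate together with a polynomial-time verification procedure. The certificate is simply the candidate path attractor $\Gamma \subseteq E$, which has size at most $k \leq |E|$ and is therefore polynomially bounded. The nontrivial part is the verifier: given $\Gamma$, we must check in polynomial time that \emph{every} path in the tree $G$ has an equivalent path (same label sequence) crossing an edge of $\Gamma$. Naively there are exponentially many paths, so the key observation is that on a tree we only need to check a polynomial number of ``witness'' paths, because of a locality/maximality argument analogous to Lemma \ref{lem:kmers}: it suffices to verify the maximal paths between consecutive edges of $\Gamma$ (the paths not crossing any attractor edge), since every other path crosses $\Gamma$ by definition.

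More concretely, here are the steps I would carry out. First I would observe that in a tree there is a unique path between any two nodes, so a path is determined by its two endpoints, giving $\bigO(|V|^2)$ candidate paths total; already this bounds the verification by polynomially many path-equivalence tests. Second, for each candidate path $p$ with label string $w_p$, checking whether \emph{some} occurrence of $w_p$ in the tree crosses an edge of $\Gamma$ reduces to a pattern-matching problem on the tree: enumerate, for each $e \in \Gamma$, all paths through $e$ of length $|w_p|$ and compare their labels to $w_p$. The number of length-$|w_p|$ paths through a fixed edge is bounded by (the number of ways to extend on each side), and although in a high-degree tree this could a priori be large, one can instead phrase the question as: does $w_p$ occur in the tree with a chosen position aligned to $e$? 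This is decidable in time polynomial in $|V|$ by a straightforward tree-traversal / dynamic-programming matching (try each of the $|w_p|-1$ possible crossing offsets at $e$, and for each offset greedily/recursively match the prefix upward and the suffix downward through the tree). Third, I would wrap this in an outer loop over all $\bigO(|V|^2)$ candidate paths and all $k$ edges of $\Gamma$, yielding a total running time polynomial in $|V|$ and hence in the input size. If all candidate paths pass, accept; otherwise reject. Soundness and completeness of the verifier follow directly from Definition \ref{def:path att}.

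The main obstacle I anticipate is controlling the branching when matching a fixed label string against paths through a fixed edge in a tree of unbounded degree: a single string $w_p$ could have many occurrences, and one must argue that testing existence (rather than enumerating all of them) is enough and can be done efficiently. I expect this is handled cleanly by, for each edge $e \in \Gamma$ and each offset $0 < t < |w_p|$, checking independently whether the length-$t$ prefix of $w_p$ labels \emph{some} path ending at the lower endpoint of $e$ going up toward the root-side and whether the length-$(|w_p|-t)$ suffix labels \emph{some} path continuing past $e$ on the other side — each of these two subproblems is a reachability-style computation over at most $|V|$ nodes per position, so it is polynomial. A secondary subtlety is making sure ``path'' here means a simple directed path consistent with the tree's edge orientations (or undirected, depending on the model fixed by Definition \ref{def:path att}); the argument is unaffected either way, since the bound on the number of endpoint-pairs and the local matching both go through. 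Putting these pieces together gives a deterministic polynomial-time verifier, establishing that the path attractor problem on trees lies in NP. \qed
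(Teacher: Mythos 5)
Your proposal is correct and takes essentially the same route as the paper's proof: the certificate is the candidate set $\Gamma$ itself, and membership in NP follows because a labeled tree has only polynomially many distinct paths (each determined by its endpoint pair, so $\bigO(|V|^2)$ of them), allowing the verifier to check in polynomial time that every such path has an occurrence intersecting an edge of $\Gamma$. The paper states this in a single sentence; your extra detail on the per-path matching (the upward extension from an attractor edge is deterministic, only the downward side branches) just fills in what the paper leaves implicit.
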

\begin{proof}
	The number of distinct paths in a labeled tree with $n$ nodes is clearly polynomial in $n$, so it takes polynomial time to verify that the edges of a given path attractor intersect at least one occurrence of all distinct paths in the tree.\qed
\end{proof}

Since in Lemma \ref{lem:NP-hardness} the graph $G$ is a labeled tree, Lemmas \ref{lem:NP-hardness} and \ref{lem:NP} imply the following:

\begin{theorem}
	The path attractor problem is NP-complete on labeled trees
\end{theorem}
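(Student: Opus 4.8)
The plan is to assemble the final theorem directly from the two lemmas that precede it, so the only real work is checking that the hypotheses line up. Lemma~\ref{lem:NP-hardness} gives NP-hardness of the path attractor problem, and crucially its proof reduces from set-cover to instances in which the graph $G$ is a \emph{labeled tree} (in fact a disjoint union of tries). Lemma~\ref{lem:NP} gives membership in NP whenever $G$ is a tree. First I would note that the class of instances in which $G$ is a labeled tree is closed under both statements: the hardness reduction produces only tree instances, and the NP-membership argument applies to all tree instances. Hence restricted to labeled trees, the path attractor problem is simultaneously NP-hard and in NP, which is exactly NP-completeness.

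Concretely, the proof is a single short paragraph. One direction: for membership, invoke Lemma~\ref{lem:NP} verbatim --- the number of distinct root-independent paths in an $n$-node labeled tree is polynomial in $n$ (one may bound it by $O(n^2)$ since a path in a tree is determined by its two endpoints), so a certificate consisting of the $k$ attractor edges can be verified in polynomial time by enumerating all distinct path-labels and checking that each has an occurrence incident to an attractor edge. Other direction: for hardness, invoke Lemma~\ref{lem:NP-hardness}, observing (as the excerpt already remarks) that the graph $G$ constructed there is a labeled tree, so the reduction from the NP-hard set-cover problem already lives inside the restricted class of tree instances. Combining the two gives that the path attractor problem, restricted to labeled trees, is NP-complete.

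I do not anticipate a genuine obstacle here, since both ingredients are already proved in the excerpt; the only thing to be careful about is the bookkeeping point that NP-hardness of a restricted subproblem, together with NP-membership of that same subproblem, yields NP-completeness of the subproblem --- and in particular that one must not accidentally claim NP-membership for the unrestricted problem on general graphs (where the number of distinct paths can be exponential and Lemma~\ref{lem:NP} no longer applies). As long as the statement and proof are phrased uniformly over the class ``$G$ is a labeled tree,'' everything matches. If anything, the subtle part worth a sentence of justification is the polynomial bound on the number of distinct paths in a tree used for the NP certificate; but this is exactly what Lemma~\ref{lem:NP} asserts, so I would simply cite it rather than reprove it.

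\begin{proof}
	Membership in NP follows from Lemma~\ref{lem:NP}. Hardness follows from Lemma~\ref{lem:NP-hardness}, since the graph $G$ produced by the reduction therein is a labeled tree, so the problem is already NP-hard when restricted to this class of inputs. \qed
\end{proof}
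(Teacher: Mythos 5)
Your proposal is correct and matches the paper exactly: the theorem is obtained by combining Lemma~\ref{lem:NP-hardness} (whose reduction constructs a labeled tree) with Lemma~\ref{lem:NP} (NP-membership on trees). The only nitpick is that the construction in Lemma~\ref{lem:NP-hardness} is a single rooted tree rather than a disjoint union of tries, but this does not affect the argument.
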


In the following theorem we show how to compute a logarithmic approximation of the smallest path attractor when the graph is a labeled tree. 

\begin{theorem}\label{th: approximation}
	If $G = \langle V, E = \{e_1, \dots, e_n\} \rangle$ is a labeled tree, we can compute a  $(\ln n)$-approximation of the smallest path attractor of $G$ in $\bigO(n^2)$ time. 
\end{theorem}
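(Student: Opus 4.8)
The plan is to reduce the smallest--path--attractor problem on $G$ to \textsc{Minimum Set Cover} and run the classical greedy set-cover algorithm; this is the reverse direction of the reduction behind Lemma~\ref{lem:NP-hardness}. Take the universe $U$ to be the set of \emph{distinct labelled paths} of $G$, i.e.\ the distinct strings $\delta(\langle v_1,v_2\rangle)\cdots\delta(\langle v_{k-1},v_k\rangle)$ over all paths $v_1\cdots v_k$ of $G$ with $k\ge 2$; for each edge $e\in E$ let $S_e\subseteq U$ be the set of distinct labelled paths having an occurrence through $e$. Then Definition~\ref{def:path att} reads \emph{verbatim} as the statement that $\Gamma\subseteq E$ is a path attractor of $G$ if and only if the family $\{S_e:e\in\Gamma\}$ covers $U$; hence the smallest path attractor of $G$ is exactly a minimum set cover of $(U,\{S_e\}_{e\in E})$, and any $c$-approximation to the latter is a $c$-approximation to the former. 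Running the greedy algorithm --- repeatedly pick the edge covering the most still-uncovered elements --- then returns a cover of size at most $(1+\ln|U|)$ times the optimum.

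It then remains to bound $|U|$ and to obtain the $\bigO(n^2)$ running time. Since $G$ is a tree, any path is determined by its two endpoints, so $G$ has at most $\binom{|V|}{2}=\binom{n+1}{2}$ paths of length $\ge 1$ and therefore $|U|\le\binom{n+1}{2}<n^2$; the set system thus has polynomial size (consistent with the verification argument in the proof of Lemma~\ref{lem:NP}), greedy runs in polynomial time, and plugging the bound on $|U|$ into the greedy guarantee gives the claimed $(\ln n)$-approximation (the length-one paths, one per alphabet symbol, contribute at most $\sigma\le n$ elements and change nothing). Before running greedy I would also shrink the instance using a compact path index of $G$ --- a suffix-tree--like structure over the paths of the tree: as in the string case, all distinct labelled paths that share the same set of occurrence start-vertices form a chain under the prefix order along which ``being covered'' is monotone, so it suffices to keep in $U$ one representative per branching class together with the explicit node labels, making the effective universe a function of the index size rather than of $n^2$.

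I expect the running time to be the main obstacle. A black-box greedy run over the raw set system is too slow: $\sum_e|S_e|$ can itself be $\Theta(n^3)$ (e.g.\ a path graph with pairwise distinct labels), so one cannot even list the sets or compute all the initial coverage counts by enumeration. The plan is (i) to build the path index of $G$ in $\bigO(n^2)$ time; (ii) to read off from it the representative distinct paths together with, for each, a tree-structured description of the edges lying on its occurrences; and (iii) to simulate greedy by keeping for every edge $e$ a counter of still-uncovered representatives through $e$, selecting the maximum counter in $\bigO(n)$ per round and, after committing an edge, updating the affected counters via batched/range operations on the index rather than one path--edge incidence at a time. Showing that these updates amortize to $\bigO(n^2)$ overall --- i.e.\ that the tree structure lets us avoid touching the superquadratically many incidences individually --- is the delicate part; the set-cover equivalence and the logarithmic guarantee itself are routine once the bookkeeping is in place.
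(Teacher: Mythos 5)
Your overall route is the paper's route: reduce to set cover over a universe of \emph{representatives} of path-label classes obtained from a suffix-tree--like index of $G$, and run greedy. The paper makes this concrete by taking $\mathcal T(G)$ to be the compacted trie of the reversed root-to-node label strings (Breslauer's suffix tree of a tree), letting the universe be its \emph{edges}, and letting the set of an edge $e_i\in E$ contain every trie edge $e'$ such that some occurrence of the (reversed) string read from the root of $\mathcal T(G)$ through the first character of $e'$'s label crosses $e_i$; since a compacted edge has no branching, covering this shortest representative covers every path whose reversal ends inside that edge's label, which is exactly your ``monotone along the chain'' observation.

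The genuine gap is in the running-time analysis, which you yourself flag as unresolved. The missing fact is quantitative: the index has $\bigO(n)$ edges, because it is the compacted trie of only $n+1$ strings (one reversed root-to-node path per node of $G$) and hence has at most $n+1$ leaves. Once you commit to \emph{this} $\bigO(n)$-element set as the universe (rather than treating it as a post-hoc shrinking of the $\Theta(n^2)$-path universe), the ``delicate'' amortization you worry about evaporates: the edge--representative incidence relation has at most $n\cdot\bigO(n)=\bigO(n^2)$ entries, so you can materialize it and the degree counters in $\bigO(n^2)$ time, run greedy for $\bigO(n)$ rounds picking the maximum-degree edge in $\bigO(n)$ time per round, and charge each decrement to the removal of one incidence, for $\bigO(n^2)$ total --- no batched or range updates on the index are needed. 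Your concern that $\sum_e|S_e|$ could be $\Theta(n^3)$ applies only to the unreduced universe, which should never be instantiated. A secondary point: with the unreduced universe of up to $\binom{n+1}{2}$ distinct labels, greedy's guarantee is $1+\ln(n^2)\approx 2\ln n$, whereas the $\bigO(n)$-element universe gives the stated $\ln n$ up to an additive constant; so the universe reduction is also what delivers the approximation factor as claimed, not just the time bound.
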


When $G$ is a string (i.e. a path graph), Theorem \ref{th: approximation} yields a string attractor of size $\gamma^*\ln n$ computable in $\bigO(n^2)$ time. Note that we already provided a logarithmic approximation of the smallest string attractor in the previous section by showing that $b^* \in \bigO(\gamma^* \log (n/\gamma^*))$ (Corollary \ref{cor: approximation rates}). However, finding the smallest macro scheme is a NP-complete~\cite{gallant1982string} problem, so Theorem \ref{th: approximation} is our first polynomial-time algorithm computing such an approximation.

%
%
%

\section{Conclusions}

We introduced a new combinatorial object---the string attractor---that generalizes all known compressors for repetitive strings and is related with the notion of string's repetitiveness. We proved that the smallest attractor problem becomes NP-complete when generalized to labeled trees, and described several approximation algorithms. In particular, the smallest path attractor on trees (and, in particular, the smallest string attractor) can be approximated within a factor $\ln n$ in $\bigO(n^2)$ time, and the smallest string attractor of size $\gamma^*$ can be approximated within a factor $\bigO(\log^2(n/\gamma^*))$ in $\bigO(n)$ time. We moreover used these combinatorial objects to uncover new relations between repetitiveness measures of string compressors and to obtain a universal compressed data structure for text extraction.

Our paper leaves many exciting open problems. In particular: determining the computational complexity of the smallest string attractor problem, finding better approximation algorithms for the smallest string/path attractor, deriving new relations between repetitiveness measures using the techniques here developed, and deriving better bounds for the linguistic complexity of a string as a function of the size of its smallest attractor.
Finally, note that path attractors capture all sequences of labels in a graph, but do not take into account the graph topology. A possible, more powerful, extension could be the that of considering a set $\Gamma$ of graph's edges such that every distinct connected subgraph has an isomorphic occurrence crossing one of the elements in $\Gamma$.

\subsubsection{Acknowledgements}

I would like to thank Alberto Policriti for many fruitful discussions on the topic. Thanks also to Gonzalo Navarro, Travis Gagie, Philip Bille, Inge Li G\o rtz, Mikko Berggren Ettienne, and Anders Roy Christiansen for many constructive suggestions.

\bibliographystyle{plain}
\bibliography{attractors}

\appendix

\section{Appendix}\label{sec:app}

\subsection{Proof of Lemma \ref{lem: grammars}}

Consider any substring $T[i..j]$. Either the start rule of $G$ is of the form $S\rightarrow A_1B_1$, or $S \rightarrow Z_1^\ell$, for some $\ell >2$. Then, either (i) $T[i..j]$ is fully contained in the strings generated by $A_1, B_1$, or $Z_1$, or (ii) it spans $A_1|B_1$ (in the first case) or $Z_1^{\ell_1}|Z_1^{\ell_2}$, with $\ell_1+\ell_2=\ell$ (in the second case), where we highlighted with a vertical bar the crossing point. 
In case (i), we apply the same reasoning recursively until falling in case (ii). 
In case (ii), if $T[i..j]$ spans $A_i|B_i$, for some $1\leq i \leq g'$, then we have our thesis. The remaining case to consider is the one where $T[i..j]$ spans $Z_i^{\ell_1}|Z_i^{\ell_2}$, for some $1 \leq i \leq g''$ and $\ell_1,\ell_2>1$. This means that $T[i..j]$ can be written as a suffix of $Z_i$ followed by a concatenation of $k\geq 0$ copies of $Z_i$ followed by a prefix of $Z_i$, i.e. $T[i..j] = Z_i[l..|Z_i|]Z_i^{k}Z_i[1..r]$ for some $1 \leq l \leq |Z_i|$, $k\geq 0$, and $0 \leq r \leq |Z_i|$. Then, $T[i..j]$ has also an occurrence overlapping the first occurrence of $Z_i$ in the term $Z_i^{\ell_1+\ell_2}$, i.e. crossing the two factors $Z_i$ and $Z_i^{\ell_1+\ell_2-1}$ of $Z_i^{\ell_1+\ell_2}$: $T[i..j] = Z_i[l..|Z_i|]Z_i^{\ell_1+\ell_2-1}[1..(j-i)-(|Z_i|-l)]$. \qed

\subsection{Proof of Theorem \ref{th: attractor -> macro}}

	We add $\gamma$ equally-spaced attractor elements following Lemma \ref{lemma: superset}. 
	We define phrases of the parse around attractor elements in a ``concentric exponential fashion'', as follows.
	Characters on attractor positions are explicitly stored.
	Let $i_1 < i_2$ be two consecutive attractor elements. Let $m = \lfloor(i_1 + i_2)/2\rfloor$ be the middle position between them. We create a phrase of length 1 in position $i_1+1$, followed by a phrase of length 2, followed by a phrase of length 4, ... until the new phrase does not include position $m$. We do the same (but right-to-left) for position $i_2$. We finally add a phrase in the middle, i.e. in the remaining ``hole'' covering position $m$. For the phrases' sources, we use any of their occurrences crossing an attractor element (possible by definition of $\Gamma$). 
	
	Suppose we are to extract $T[i]$, and $i$ is inside a phrase of length $\leq 2^e$, for some $e$. Let $i'$ be the position from where $T[i]$ is copied according to our bidirectional parse. By the way we defined the scheme, it is not hard to see that $i'$ is either an explicitly stored character or lies inside a phrase of length\footnote{To see this, note that $2^e = 1 + 2^0 + 2^1 + 2^2 + \cdots, + 2^{e-1}$: these are the lengths of phrases following (and preceding) attractor elements (included). In the worst case, position $i'$ falls inside the longest such phrase, of length $2^{e-1}$} $\leq 2^{e-1}$. 
	Since attractor elements are at distance at most $n/\gamma$ from each other, both the parse height and the number of phrases we introduce per attractor element are $\bigO(\log(n/\gamma))$. \qed
			
\subsection{Proof of Theorem \ref{th: attractor -> SLP}}
We first build the bidirectional parse of Theorem \ref{th: attractor -> macro}, with $\bigO(\gamma \log(n/\gamma))$ phrases of length at most $n/\gamma$ each. 
We will process phrases in order of increasing length.  We maintain the following invariant. Every maximal substrings $T[i..j]$ covered by processed phrases is collapsed into a single nonterminal $Y$. Moreover, once we finish processing a phrase $T[i'..j']$, the phrase will be represented by a single nonterminal $X$ (expanding to $T[i'..j']$). We will create $\bigO(1)$ new nonterminals to merge $X$ with the (at most) two adjacent nonterminals representing all contiguous processed phrases to keep our invariant true. It is clear that, once all phrases have been processed, our invariant implies that the entire string is collapsed into a single nonterminal $S$. We now show how to process a phrase and analyze the number of nonterminal introduced by the process.

The overall idea is to map a phrase on its source and copy the source's parse into nonterminals, introducing new nonterminals at the borders if needed. By the bidirectional parse's definition, the source of any phrase $T[i..j]$ overlaps only phrases shorter than $j-i+1$ characters. Since we process phrases in order of increasing length, phrases overlapping the source have already been processed and therefore their parse into nonterminals is well-defined.

We make sure that our string is parsed in levels. 
At level $i>0$, we parse the nonterminals (or terminals if $i=1$) of level $i-1$ in groups of length 2 or 3, and replace each group with a new nonterminal. 

We start by parsing each maximal substring $T[i..j]$ containing only phrases of length 1 into arbitrary blocks of length 2 or 3. We create a new nonterminal per block. Note that, by the way the parse is defined, this is always possible (since $j-i+1 \geq 2$ always holds).
We repeat this process recursively---grouping nonterminals at level $k\geq0$ to form new nonterminals at level $k+1$---until $T[i..j]$ is collapsed into a single nonterminal.
Our invariant now holds for the base case, i.e. for phrases of length $t=1$: each maximal substring containing only phrases of length $\leq t$ is collapsed into a single nonterminal.
We now proceed with phrases of length $\geq 2$, in order of increasing length. Let $T[i..j]$ be a phrase to be processed, with source $T[i'..j']$. By definition of the parse, $T[i'..j']$ overlaps only phrases of length at most $j-i$ and, by inductive hypothesis, these phrases have already been processed. We group characters of $T[i..j]$ in blocks of length 2 or 3 copying the parse of $T[i'..j']$ at level 0. Note that this might not be possible for the borders of length 1 or 2 of $T[i..j]$: this is the case if the block containing $T[i']$ starts before position $i'$ (symmetric for $T[j']$). In this case, we create $\bigO(1)$ new nonterminals as follows. If $T[i'-1,i',i'+1]$ form a block, then we group $T[i,i+1]$ in a block of length 2 and collapse it into a new nonterminal at level 1. If, on the other hand, $T[i'-1,i']$ form a block, we consider two sub-cases. If $T[i'+1,i'+2]$ form a block, then we create the block to $T[i,i+1,i+2]$ and collapse it into a new nonterminal at level 1. If $T[i'+1,i'+2,i'+3]$ form a block, then we create the two blocks $T[i,i+1]$ and $T[i+2,i+3]$ and collapse them into 2 new nonterminals at level 1. We repeat this process for the nonterminals at level $k\geq 1$ that were copied from $T[i'..j']$, grouping them in blocks of length 2 or 3 according to the source and creating $\bigO(1)$ new nonterminals at level $k+1$ to cover the borders. After $\bigO(\log(n/\gamma))$ levels, $T[i..j]$ is collapsed into a single nonterminal. Since we create $\bigO(1)$ new nonterminals per level, we introduce overall $\bigO(\log(n/\gamma))$ new nonterminals. 

At this point, let $Y$ be the nonterminal just created that expands to  $T[i..j]$. Since we process phrases by increasing length, $Y$ is either followed ($YX$), preceded ($XY$), or in the middle ($X_1YX_2$) of one or two nonterminals expanding to a maximal substring containing contiguous processed phrases. We now show how to collapse these two or three nonterminals in order to maintain our invariant true, while at the same time satisfying the property that nonterminals at level $i$ expand to two or three nonterminals at level $i-1$. We show the procedure in the case $Y$ is preceded by a nonterminal $X$, i.e. we want to collapse $XY$ into a single nonterminal. The other two cases can then easily be derived using the same technique. Let $i_X$ and $i_Y$ be the levels of $X$ and $Y$, and let us assume that  $i_X \leq i_Y$ (the case $i_X > i_Y$ is symmetric). If $i_X=i_Y$, then we just create a new nonterminal $W\rightarrow XY$ and we are done. Otherwise, let $Y_1\dots Y_t$, with $t\geq2$, be the sequence of nonterminals that are the expansion of $Y$ at level $i_X$. Our goal is to collapse the sequence $XY_1\dots Y_t$ into a single nonterminal, while introducing at most $\bigO(\log(n/\gamma))$ new nonterminals (that are charged to the phrase $T[i..j]$: overall, we will therefore introduce $\bigO(\log(n/\gamma))$ new nonterminals per phrase). The parsing of $Y_1\dots Y_t$ into blocks is already defined (by the expansion of $Y$), so we only need to copy it while adjusting the left border in order to include $X$. We distinguish two cases. If $Y_1$ and $Y_2$ are grouped into a single block, then we replace this block with the new block $XY_1Y_2$ and collapse it in a new nonterminal at level $i_X+1$. If, on the other hand, $Y_1$, $Y_2$, and $Y_3$ are grouped into a single block then we replace it with the two blocks $XY_1$ and $Y_2Y_3$ and collapse them in two new nonterminals at level  $i_X+1$. We repeat the same procedure at levels  $i_X+1, i_X+2, \dots, i_Y$, until everything is collapsed in a single nonterminal. At each level we introduce one or two new nonterminals, therefore overall we introduce at most $2(i_Y-i_X)+1 \in \bigO(\log(n/\gamma))$ new nonterminals.\qed

\subsection{Proof of Theorem \ref{thm:extract}}

We describe a data structure supporting the extraction of $\alpha = \frac{w\log_\tau(n/\gamma)}{\log\sigma}$ packed characters in $O(\log_\tau(n/\gamma))$ time. To extract a substring of length $\ell$ we divide it into $\lceil\ell/\alpha\rceil$ blocks and extract each block with the proposed data structure. Overall, this will take $O((\ell/\alpha+1)\log_\tau(n/\gamma)) = O(\log_\tau(n/\gamma) + \ell\log(\sigma)/w)$ time. 

Our data structure is stored in $O(\log_\tau(n/\gamma))$ levels. For simplicity, we assume that $\gamma$ divides $n$ and that $n/\gamma$ is a power of $2\tau$. Intuitively, we will build a DAG with nodes of out-degree $2\tau$; each node will be associated with a substring whose length is exponentially decreasing in the levels (with base $2\tau$).

The top level (level 0) is special: we divide the string into $\gamma$ blocks 	$T[1..n/\gamma]\,T[n/\gamma+1..2n/\gamma]\dots T[n-n/\gamma+1..n]$ of size $n/\gamma$. For levels $i>0$, we let $s_i = n/(\gamma\cdot \tau^i)$ and, for every element $j\in\Gamma$, we consider the $2\tau$ non-overlapping blocks of length $s_i$: 
$T[j-s_i\cdot k +1 ... j-s_i\cdot (k-1)]$ and $T[j+s_i\cdot (k-1) + 1 ... j+s_i\cdot k]$, for $k=1,\dots, \tau$. 
Each such block is composed of two half-blocks of length $s_i/2$. In total, there are $4\tau$ half blocks. 
We moreover consider a sequence of $4\tau-1$ additional consecutive and non-overlapping half-blocks of length $s_i/2$, starting in the middle of the first half-block above defined and ending in the middle of the last.
Note that, with this choice of blocks, at level $i$ for any substring $S$ of length at most $s_i/4$ (inside the considered regions of length $2\tau\cdot s_i$ around elements of $\Gamma$) we can always find a half-block fully containing $S$. This property will now be used to map ``short'' strings from the first to last level of our structure without splitting them, until reaching explicitly stored characters at some level that we define below.

From the definition of string attractor, blocks at level $0$ and each half-block at level $i> 0$ have an occurrence at level $i+1$ crossing some position in $\Gamma$. Such an occurrence can be fully identified by the coordinate $\langle \mathit{off}, j\rangle$, for $0 \leq \mathit{off} < 2s_i$ and $j\in\Gamma$, indicating that the occurrence starts at position $j-s_i + \mathit{off} + 1$.
Let $i^*$ be the smallest number such that $s_{i^*} < 4\alpha = 
\frac{4w\log_\tau(n/\gamma)}{\log\sigma}$. Then $i^*$ is the last level of our structure.
At this level, we explicitly store a packed string with the characters of the blocks. This uses in total $O(\gamma \cdot \tau \cdot  s_{i^*}\log(\sigma)/w) = O(\gamma\tau\log_\tau(n/\gamma))$ words of space. All the blocks at level 0 and half-block at levels $0<i<i^*$ store instead the coordinates $\langle \mathit{off},j\rangle$ of their primary occurrence in the next level. At level $i^*-1$, these coordinates point inside the strings of explicitly stored characters.

Let $S= T[i..i+\alpha-1]$ be the substring to be extracted. Note that we can assume $n/\gamma \geq \alpha$; otherwise all the string can be stored in plain packed form using $n\log(\sigma)/w < \alpha \gamma\log(\sigma)/w \in O(\gamma\log_\tau (n/\gamma))$ words and we do not need any data structure. It follows that $S$ either spans two blocks at level 0, or it is contained in a single block. The former case can be solved with two queries of the latter, so we assume, without losing generality, that $S$ is fully contained inside a block at level $0$. To retrieve $S$, we map it down to the next levels (using the stored coordinates of primary occurrences of half-blocks) as a contiguous substring as long as this is possible, that is, as long as it fits inside a single half-block. Note that, thanks to the way half-blocks overlap, this is always possible as long as $\alpha \leq s_i/4$. By definition, then, we arrive in this way precisely to level $i^*$, where characters are stored explicitly and we can return the packed substring. Note also that, since blocks in the same level have the same length, at each level we spend only constant time to find the pointer to the next level.
\qed

\subsection{Proof of Lemma \ref{lem:NP-hardness}}

	We show a reduction from \texttt{set-cover}. Let $\langle U, S_1, \dots, S_t \rangle$ be an instance of \texttt{set-cover}, where $U=\{u_1, \dots, u_n\}$ is the universe and $\bigcup_{i=1}^{t} S_{i} = U$. Without loss of generality, we can assume $u_i = i-1$, for $i=1, \dots, n$, and $n=2^m$ for some $m\geq0$: if $n$ is not a power of two, then we add a new set $S_{t+1} = [|U|,2^{\lceil \log_2|U| \rceil}-1]$ and replace the universe by $U' = U\ \cup\ S_{t+1}$ (i.e. we add new elements until reaching the next power of two). It is clear that $S_{i_1}, \dots, S_{i_q}$ cover $U$ if and only if $S_{i_1}, \dots, S_{i_q}, S_{t+1}$ cover $U'$, so the problem with $n$ being a power of two is still NP-complete.

We build a labeled tree $\mathcal T$ as follows. The alphabet is $\langle 0,1, s_1, \dots, s_t\rangle$. The root $r(\mathcal T)$ has $t$ children $c_1, \dots, c_t$ with incoming edges labeled, respectively, $s_1, \dots, s_t$. Each $c_i$ has two children nodes, $c_i^0$ and $c_i^1$. Edges $\langle c_i, c_i^0\rangle$ and $\langle c_i, c_i^1\rangle$ are labeled 0 and 1, respectively.
Let $S_i = \{u_{j_{i,1}}, \dots, u_{j_{i,b_i}}\}$. The subtrees rooted in $c_i^0$ and $c_i^1$ are identical: each is a binary trie containing the $m$-digits binary representations of $u_{j_{i,1}}, \dots, u_{j_{i,b_i}}$. We claim that our \texttt{set-cover} instance $\langle U, S_1, \dots, S_t \rangle$ has a solution of size at most $\gamma$ if and only if $\mathcal T$ has a path attractor of size at most $t+2\gamma$. 

$(\Rightarrow)$ Let $S_{i_1}, \dots, S_{i_q}$ be a cover of $U$. We want to prove that $\mathcal T$ has an attractor of size at most $t+2q$.
We claim that $\Gamma = \{ \langle r(\mathcal T), c_i \rangle: i=1, \dots, t  \}\ \cup\ \{ \langle c_i, c_i^0 \rangle : i=i_1, \dots, i_q \} \cup\ \{ \langle c_i, c_i^1 \rangle : i=i_1, \dots, i_q \}$ is such a path attractor for $\mathcal T$. By the way $\mathcal T$ is defined, the strings labeling paths in this tree either start with a $s_i$, for $1\leq i \leq t$, or are any combination of at most $m+1$ bits (i.e. characters in $\{0,1\}$). In particular, note that \emph{all} combinations of at most $m+1$ bits occur as paths, since (i) $|U|=2^m$, (ii) $\bigcup_{i=1}^{t} S_{i} = U$, and (iii) the binary representations of $S_i$-elements are always preceded by the labels 0 and 1 in the tree's paths. All paths whose labels start with a $s_i$ are clearly captured by the edges $\{ \langle r(\mathcal T), c_i \rangle: i=1, \dots, t  \}\in \Gamma$. We now focus on the binary strings labeling the remaining paths.
Since $S_{i_1}, \dots, S_{i_q}$ is a cover of $U$, the binary representations of elements in $S_{i_1}, \dots, S_{i_q}$ are all possible combinations of $m$ bits. This implies that the set of labels of paths starting from edges in $\{ \langle c_i, c_i^x \rangle : i=i_1, \dots, i_q \}$, with $x\in\{0,1\}$, are all possible combinations of $x$ followed by any sequence of $m$ bits. Since in $\Gamma$ we include both $\langle c_i, c_i^0 \rangle$ and $\langle c_i, c_i^1 \rangle$ for $i=i_1, \dots, i_q$, these edges capture all prefixes of paths labeled with any combination of $m+1$ bits. Since the closure under prefix of the set $\{0,1\}^{m+1}$ is exactly $\{0,1\}^{\leq m+1}$, we finally obtain that edges in $\Gamma$ capture all paths in $\mathcal T$.

$(\Leftarrow)$ Let $\Gamma$ be a path attractor for $\mathcal T$, of size $q$. We want to prove that our \texttt{set-cover} instance has a solution of size at most $(q - t)/2$.	
First, note that $\{\langle r(\mathcal T), c_i \rangle: i=1, \dots, t\} \subseteq \Gamma$, as otherwise the $t$ unary paths labeled $s_1, \dots, s_t$ would not be represented by the attractor. Let $\alpha = q-t$. The $\alpha$ edges in $\Gamma -   \{\langle r(\mathcal T), c_i \rangle: i=1, \dots, t\}$ intersect all possible sequences of at most $m+1$ bits (by the attractor's definition). We create a cover of $U$ of the desired size as follows. We first initialize two empty sets $R_0$ and $R_1$. Then, for every edge $\langle u, v\rangle$ in $\Gamma -   \{\langle r(\mathcal T), c_i \rangle: i=1, \dots, t\}$, if $\langle u, v\rangle = \langle c_i, c_i^x\rangle$ or $\langle u, v\rangle$ belongs to the subtree rooted in $c_i^x$, for some $1\leq i \leq t$ and $x\in \{0,1\}$, then we insert $S_i$ in $R_x$. We claim that both $R_0$ and $R_1$ are a cover of $U$. In particular, note that $|R_0|+|R_1|\leq \alpha$: we pick as solution of our \texttt{set-cover} instance the $R_x$ of minimum size $|R_x|\leq \alpha/2 = (q - t)/2$, that is, our claim. We now prove that $R_1$ is a cover of $U$ (the proof is analogous for $R_0$). 
Note that the following two facts hold. (i)
By definition, the tree rooted in $c_i^1$ is a trie containing the binary representations of elements from $S_i$. (ii) Edges $\langle u, v\rangle$ in $\Gamma -   \{\langle r(\mathcal T), c_i \rangle: i=1, \dots, t\}$ such that $\langle u, v\rangle = \langle c_i, c_i^1\rangle$ or $\langle u, v\rangle$ belongs to the subtree rooted in some $c_i^1$ capture all paths labeled with $1\{0,1\}^m$ (from the attractor's definition). From (i) and (ii) it is easy to derive that sets in $R_1$ must cover the whole $U$. Suppose, by contradiction, that some $u\in U$ is not covered by $R_1$, and let $B_u$ be the binary representation of $u$. Let moreover $S_{i_1}, \dots, S_{i_k}$ be all the sets containing $u$. Since $u\in U$ is not covered by $R_1$, none of the $S_{i_1}, \dots, S_{i_k}$ belongs to $R_1$. Then, by definition of $R_1$, this implies that no edge of $\Gamma$ is equal to $\langle c_i, c_i^x\rangle$ or belongs to the subtree rooted in $c_i^x$, for any $i\in \{i_1, \dots, i_k\}$. Since these are the only edges potentially intersecting path $1B_u$, this means that $1B_u$ is not captured by $\Gamma$, which is a contradiction.	\qed

\subsection{Proof of Theorem \ref{th: approximation}}

	We show a reduction to set-cover, and then solve the instance with the greedy algorithm~\cite{chvatal1979greedy} that at each round picks the set covering the largest number of non-covered elements in the universe. It is known~\cite{chvatal1979greedy} that the greedy algorithm computes a $(\ln n)$-approximation of the optimal solution.
We make use of the suffix tree $\mathcal T(G)$ of the tree $G$~\cite{breslauer1998suffix}. $\mathcal T(G)$ is obtained by conceptually building the trie containing all reversed prefixes of root-to-leaves paths in $G$, and then applying path compression (as in the standard suffix tree of a string). $\mathcal T(G)$ takes $\bigO(n)$ space, can be built in $\bigO(n)$ time, and contains a node (implicit or explicit) for each reversed path in $G$~\cite{breslauer1998suffix} (i.e. all $G$'s reversed paths are represented from the root to some node of $\mathcal T(G)$).

The universe of our set-cover instance is composed of all $N\in \bigO(n)$ edges of $\mathcal T(G)$: $U = E' = \{e'_1, e'_2, \dots, e'_N\}$. The collection of sets is $S = \{s_1, \dots, s_n\}$: one set per $G$'s edge. 
Let $st(e')$, where $e'$ is a $\mathcal T(G)$'s edge, denote the string read from the root of $\mathcal T(G)$ to the first character (included) in the label of $e'$.
Set $s_i$ contains all $\mathcal T(G)$'s edges $e'_j$ such that there is an occurrence of $\overleftarrow{st(e'_j)}$ in $G$ (i.e. a path labeled $\overleftarrow{st(e'_j)}$) crossing  edge $e_i$. 
We claim that $S=\{s_{i_1}, \dots, s_{i_t}\}$ is a solution to this instance of set-cover if and only if $\Gamma=\{e_{i_1}, \dots, e_{i_{t}}\}$ is a path attractor for $G$. In particular, note that the one-to-one correspondence implies that a $c$-approximation to the set-cover instance yields a $c$-approximation of the smallest path attractor.

$(\Rightarrow)$ Let $S = \{s_{i_1}, \dots, s_{i_t}\}$ be a solution to the set-cover instance. We want to show that $\Gamma = \{e_{i_1}, \dots, e_{i_{t}}\} \subseteq E$ is a path attractor for $G$.
Now, consider the label $s$ of any path in $G$. By definition of $\mathcal T(G)$, $\overleftarrow s$ is a path starting from the root and ending inside the label of some edge $e'$ of $\mathcal T(G)$. In particular, we can write $s = \overleftarrow{st(e')x}$, for some prefix $\overleftarrow x$ of $s$.	
Since $S$ is a solution to the set-cover instance, there exists a $s_i\in S$ such that $e'\in s_i$. By the way the set-cover instance is defined, this implies that $e_i\in \Gamma$ crosses an occurrence of $\overleftarrow{st(e')}$ in $G$. But then, the fact that $x$ is the continuation of $st(e')$ inside the label of $e'$ means that all paths labeled $\overleftarrow{st(e')}$ in $G$ are always preceded by $\overleftarrow x$, i.e. $e_i\in \Gamma$ crosses an occurrence of $\overleftarrow{st(e')x} = s$ in $G$. Since $s$ is a generic label of a path in $G$, we obtain that $\Gamma$ is a valid path attractor.

$(\Leftarrow)$ Let $\Gamma = \{e_{i_1}, \dots, e_{i_{t}}\} \subseteq E$ be a path attractor for $G$. We want to show that $S = \{s_{i_1}, \dots, s_{i_t}\}$ is a solution to the set-cover instance. Pick any suffix tree edge $e'\in U$. Now, consider the string $st(e')$. By the definition of $\mathcal T(G)$, $\overleftarrow{st(e')}$ is a path in $G$. Since $\Gamma$ is an attractor for $G$, there exist a $e_i\in \Gamma$ intersecting an occurrence of $\overleftarrow{st(e')}$. Then, by definition of the set-cover instance, $e' \in s_i \in S$. Since $e'$ is a generic element of $U$, we obtain that $S$ is a cover of $U$.

We implement the logarithmic approximation algorithm as follows. We maintain an undirected bipartite graph $H=\langle E,E', F \subseteq E\times E' \rangle$, having as vertexes all elements in $E$ (i.e. $G$'s edges) and $E'$ (i.e. $\mathcal T(G)$'s edges). We add an edge $\langle e,e' \rangle$ in $F$ whenever $G$'s edge $e$ crosses an occurrence of $st(e')$. We keep $G$'s edges in an array $P$, and associate to each of them an integer storing their degree in $H$: $deg(e) = |\{\langle e,e' \rangle \in F\}|$. The degrees can be computed in $|F|\in\bigO(n^2)$ time at the beginning. 
Our algorithm runs in $\bigO(n)$ steps: at each step, we pick the $e\in P$ with largest $deg(e)$. After that, for each $e'$ such that $\langle e,e' \rangle \in F$, we decrease by one the degree of neighbors of $e'$ in $H$, i.e. of all $e_i\in E$ such that  $\langle e_i, e' \rangle \in F$, and we remove $e$, all its adjacent edges, and all its neighbors from $H$. We stop the procedure when $H$ does not have anymore vertexes from $E'$. 

Since each edge of $H$ can be removed only once, re-computing the degrees takes overall $\bigO(n^2)$ time. Picking the minimum from $P$ takes $\bigO(n)$ time at each step. Overall, the procedure runs in $\bigO(n^2)$ time. \qed

\end{document}